\DeclareMathOperator\erf{erf}
\newtheorem{thm}{Theorem}[section]
\newtheorem{prop}[thm]{Proposition}
\newtheorem{define}[thm]{Definition}
\DeclarePairedDelimiter{\ceil}{\lceil}{\rceil}
\title{Partially phase-locked solutions to the Kuramoto model.}
\author{
  Jared C. Bronski \and Lan Wang
}
\institute{J. C. Bronski \at Department of Mathematics \\ University of Illinois \\ 1409 W. Green St. Urbana, IL 61801 \\ Tel.: +1-217-244-8218 \\
  \email{bronski@illinois.edu} \and
Lan Wang \at Department of Mathematics \\ University of Illinois \\ 1409 W. Green St. Urbana, IL 61801 \\ Tel.: +1-217-244-8218 \\
  \email{lanwang2@illinois.edu}
}
\date{May 14, 2020}
\begin{document}

\maketitle
\begin{abstract}
 The Kuramoto model is a canonical model for understanding
 phase-locking phenomenon.
 It is well-understood that, in the usual mean-field scaling, full phase-locking is unlikely and that it is partially phase-locked states that are important in
 applications.  Despite this, while there has been much attention
 given to existence and stability of fully phase-locked states in the
 finite $N$ Kuramoto model,
 the partially phase-locked states have
 received much less attention. In this paper we present two
 related results. Firstly, we derive an analytical criteria that, for
 sufficiently strong coupling, guarantees the existence of a 
partially phase-locked state by proving the existence of an attracting
ball around a fixed point of a subset of the oscillators. We also
derive
a larger invariant ball such that any point in it will asymptotically
converge to the attracting ball. Secondly, we consider the large $N$ (thermodynamic)
limit for the Kuramoto system with randomly distributed frequencies. Using some results of De Smet and Aeyels
on partial entrainment, we derive a  deterministic condition giving almost sure existence of a partially
entrained state for sufficiently strong coupling when the natural
frequencies of the individual oscillators are independent identically
distributed random variables, as well as upper and lower bounds on the size of the largest cluster of partially entrained oscillators. 
Interestingly in a series on numerical experiments we find that the observed size of the largest entrained cluster is predicted extremely well by 
the upper bound. 
\end{abstract}

\section{Introduction} 

\subsection{Background}

Synchronization and phase-locking phenomena are ubiquitous in the
natural world. Dynamical systems modeling a diverse collection of
phenomena including  neural signaling  \cite{Gray1994}, the beating of
the heart \cite{Torre1976} and the signaling of fire-flies 
\cite{Pikovsky2003} exhibit synchonization and phase-locking
behaviors. The (finite $N$)  Kuramoto model \cite{Kuramoto1984,Kuramoto1991}
\begin{equation} 
    \dot{\theta_i} = \omega_i - \frac{\gamma}{N} \sum_{j}
    \sin(\theta_i - \theta_j) \qquad   i = 1,2,...,N 
\label{eqn:Kuramoto}
\end{equation}
has proven to be a popular model for describing the dynamics of these
systems. Here $\theta_i \in \mathbb{T}^1 = (-\pi,\pi]$ is a phase
variable describing the state of the $i^{th}$ oscillator, $\omega_i
\in \mathbb{R}$ is the natural frequency of the $i^{th}$ oscillator,
and $\gamma>0$ is the coupling strength among the oscillators.
Here we are assuming the simplest graph topology: the case of all-to-all
coupling (complete graph) with homogeneous interactions.   
A great deal of work has been directed towards studying  necessary and/or
sufficient conditions on the critical coupling strength to make the
system phase-lock \cite{Acebrn2005,Aeyels2004,Bronski2012,Canale2008,Chopra2009,Ermentrout1985,Ha2010,DeSmet2007,Strogatz2000,Verwoerd2008,Verwoerd2009}.  One
particularly useful result by Dorfler and Bullo \cite{Drfler2011} is an explicit sufficient condition on
the frequency spread that guarantees phase-locking
\begin{equation}
\gamma > \omega_{max} - \omega_{min},
\end{equation}
where $\omega_{max}:=\max\limits_{i}\omega_i$ and $\omega_{min}:=\min\limits_{i}\omega_i$.
Under this condition, the Kuramoto model (\ref{eqn:Kuramoto}) supports
full phase-locking 
for all possible distributions of the natural frequencies supported on
$[\omega_{min},\omega_{max}]$. On the other hand the standard $\ell_1/\ell_\infty$ estimate on the
sum gives a necessary condition on the coupling strength $\gamma$ in
order for the system to support a phase-locked state
\begin{equation}
 \gamma  \geq
 \frac{N}{2(N-1)}\left(\omega_{max}-\omega_{min}\right)\approx \frac{1}{2}\left(\omega_{max}-\omega_{min}\right).
\label{eqn:l1linf}
\end{equation}

From Equation (\ref{eqn:l1linf}) it is easy to see that if $\omega_i$ are 
independent and identically distributed according to a distribution
with unbounded support then in the large $N$ limit one can expect, at
best, partial phase-locking, as the law of large numbers will
guarantee that, with high probability, Equation  (\ref{eqn:l1linf}) will be
violated. To see this note that
\begin{align*}
\mathbb{P}(\max_{i\in\{1\ldots N\}} |\omega_i|<c) = (\mathbb{P}(|\omega_i|<c))^N.  
\end{align*}
If the support of the distribution is unbounded then $
(\mathbb{P}(|\omega_i|<c))<1$ for all $c$ and thus
$\lim_{N\rightarrow \infty}\mathbb{P}(\max_{i \in \{1\ldots N\}} |\omega_i|<c)\rightarrow 0$, so for fixed
coupling strength $\gamma$ full-phase-locking occurs with vanishing
probability in the large $N$ limit. One can, of course, consider
scaling $\gamma$ with $N$ --- this  involves extreme value statistics
of the distribution\cite{Bronski2012} --- but if one is taking $\gamma$ to be
fixed one must consider partial phase-locking or partial entrainment. 

The importance of  partially locked states has been
understood for a long time. The physical arguments of Kuramoto suggest
that the order parameter should
undergo a phase transition at some critical coupling $\gamma^*$, with
amplitude $\propto \sqrt{\gamma-\gamma^*}$: since the amplitude is
small for $\gamma \succsim \gamma^*$ one expects only partial
synchronization. Strogatz gives a nice 
survey in his paper from
2000\cite{Strogatz2000}.  In particular he mentions the Bowen lectures
of Kopell in 1986, where she raises the possibility of doing a
rigorous analysis for large but finite $N$ and then trying to prove a
convergence result as $N\rightarrow\infty.$ The current paper is an
attempt to follows this program. 

The general bifurcation picture described by Kuramoto
has been established  for the continuum model: Strogatz and Mirollo
introduced the continuum model and showed that if the frequencies are
distributed with density $g(\omega)$ then the incoherent state
goes unstable exactly at the critical value $\gamma^*= \frac{2}{\pi g(0)}$  predicted by
Kuramoto \cite{Strogatz1991}. Strogatz, Mirollo and
Matthews \cite{Strogatz1992} showed that below the threshold $\gamma^*$ the evolution decays to an
incoherent state via Landau damping, and Mirollo and Strogatz computed
the spectrum of the partially locked state in the continuum model \cite{DeSmet2007}.
This general picture has been expanded by a number of authors
including Fernandez \cite{Fernandez2015}, Dietert \cite{Dietert2016}
and Chiba \cite{Chiba2018}. See also the review paper of Acebr\'on,
Bonilla, P\'erez Vicente, Ritort and Spigler \cite{Acebrn2005}, particularly section
II.

The partially phase-locked states in the finite N Kuramoto model have received somewhat less attention in the
literature than either fully phase-locked states of the finite N model
 or partially phase-locked
states in the continuum model. Among the finite $N$ results we do
mention the work  of Aeyels and
Rogge \cite{Aeyels2004} and particularly De Smet and
Aeyels \cite{DeSmet2007}. De Smet and Aeyels establish a partial
entrainment result that will be important for the the latter part of this paper. For purposes of this
paper we will draw a distinction between phase-locking and entrainment
(as used by De Smet and Aeyels): we will use partially phase-locked to refer to
a subset of oscillators  which  approximately rotate rigidly. More
precisely a partially phase-locked subset $S$ of oscillators is one for which
\[
 \limsup_{t \rightarrow \infty} |\theta_i(t) - \theta_j(t)| \leq
 \delta(N) \qquad \forall~i,j \in S,
  \]
  where $\delta(N) \rightarrow 0$ as $N \rightarrow \infty$. Typically in this paper $\delta \propto
  N^{-\frac12}$, where $N$ is the total number of
  oscillators. Following De Smet and Aeyels we use partial entrainment to mean that there exists a constant $c$ small but independent of $N$  such that
  \[
 \limsup_{t \rightarrow \infty} |\theta_i(t) - \theta_j(t)| \leq c
 \qquad \forall ~~i,j \in S.
  \]
Obviously this distinction is mainly important in the large $N$
limit. 
  
In this paper we present two independent
but related results. Firstly we consider the question of perturbing
a phase locked solution by adding in additional oscillators that are
not phase-locked to the main group. We define a collection of semi-norms and
associated cylindrical sets in the phase space. We show that under
suitable conditions the semi-norms are decreasing in forward time, and
thus the associated cylindrical sets are invariant in forward
time. The invariance of the cylindrical sets in forward time implies
the existence of a subset of oscillators that remain close in phase
for all time, while the infinite directions of the cylinder
correspond to the degrees of freedom of the remaining oscillators that
are not phase-locked to the group. More precisely, we first consider a Kuramoto model with a small forcing term and prove a standard proposition showing that if the unperturbed Kuramoto problem admits a stable phase-locked solution then the perturbed problem admits a solution that stays near to this phase-locked solution. We then apply
this proposition to the Kuramoto model itself by identifying a subset of oscillators with a small spread in natural frequency and treating the remaining oscillators as a perturbation. This will lead to a sufficient condition for the existence of a partially phase-locked
solution in terms of the infimum over all subsets of oscillators of a
certain function of the frequency spread in that subset. Under such
condition, the number of unbounded oscillators is at most
$N^{1/2}$. Finally we present some supporting numerical
experiments. 

For the second result we reconsider some earlier work of De Smet and
Aeyels \cite{DeSmet2007} in the case where the natural frequencies of
the oscillators are independent and identically distributed random
variables, in the large $N$ limit. We
analyze the condition derived in \cite{DeSmet2007} for the existence
of a positively invariant region and show that in the large $N$ limit
we can find a deterministic condition guaranteeing the existence of a
positively invariant region for sufficiently large coupling constant
$\gamma$.  The theorem shows that, for the coupling strength $\gamma$
sufficiently large and $\omega_i$ chosen independently and identically
distributed from some reasonable distribution then with probability
approaching one as $N \rightarrow \infty$ there exists an entrained
subset of oscillators of positive density. We also get deterministic
upper and lower bounds on the  size of the partially entrained cluster.

\section{Definitions and a partial phase-locking result.}
Our first result is to establish that, given a set of stable phase locked
oscillators, one can add to the system a  second set of
oscillators that do not phase-lock to the first without materially
impeding the phase locking. Before going into details we first give
some intuition why we expect this to be true. The following is
reasonably well-known. Suppose that  an autonomous ODE ${\bm x}_t = {\bm f}({\bm
  x})$ has an asymptotically stable fixed point ${\bm x}_0$   where
the linearization is coercive: ${\bm y}^T \nabla {\bm f}({\bm x}_0)
{\bm y} \leq - c \Vert {\bm y} \Vert^2$. If one makes a sufficiently small
time-dependent perturbation to the ODE, ${\bm x}_t = {\bm f}({\bm
  x}) + \epsilon g({\bm x},t)$, then there will be a small ball
around the former fixed point that is invariant in forward time
(trapping) --
trajectories that begin in the region remain so for all time. To see
this let ${\bm x} = {\bm x}_0 +  {\bm y}$ and note that
\begin{align*}
 & {\bm y}_t = {\bm f}({\bm x}_0 + {\bm y}) + \epsilon {\bm g}({\bm
   x}_0 + {\bm y},t), \\
   & {\bm y}_t \approx \nabla {\bm f}({\bm x}_0) {\bf y} + \epsilon {\bm g}({\bm
     x}_0 + {\bf y},t), \\
  & \frac12 \frac{d}{dt} \Vert {\bm y} \Vert^2 \approx {\bm y}^T \nabla
    {\bm f}({\bm x}_0) {\bm y} + \epsilon {\bm y}^T {\bm g},  \\
   & \frac12 \frac{d}{dt} \Vert {\bm y} \Vert^2 \lesssim - c \Vert {\bm
     y} \Vert^2 + \frac{\epsilon}{2} (\Vert {\bm y}\Vert^2 + \Vert
     {\bm g} \Vert^2).
  \end{align*}
  Thus if $\Vert {\bm y}\Vert $ is  the right size:  large enough that $- (c-\frac{\epsilon}{2}) \Vert {\bm
     y} \Vert^2 + \frac{\epsilon}{2} \Vert {\bm g}\Vert^2 <0 $ but
   small enough to justify $ {\bm f}({\bm x}_0 + {\bm y}) \approx
   \nabla {\bm f}({\bm x}_0){\bm y}$, we find that $\frac{d}{dt}\Vert
   {\bm y}\Vert^2 \leq 0  $ and orbits initially in the ball remain so
   for all time. The intuition, therefore, is that under perturbation
   the fixed point should smear out to an invariant ball of radius
   $\sqrt{\epsilon}$. Similar constructions are used in the PDE
   context to prove the existence of attractors\cite{BG2006,Cees.1993,Otto.2005,Otto.2015,Nicolaenko1985,ralf.2002,Ralf.2014}.
  In the proof of the actual theorem, of course, we
   will take a bit more care but this is the essential idea. 
  
 Our first goal is to define what we mean by a partially phase-locked
solution. To this end we shall define a family of semi-norms $\Vert
\cdot \Vert_S$ indexed by a subset of oscillators $S \subseteq
\{1,2,3,\ldots,N\}$
representing the collection of phase-locked oscillators.

\begin{define}
Given a non-empty index subset $S \subseteq \Omega = \{1,2,3,\ldots,N\}$, we define a semi-norm on a phase vector $\theta$ with respect to $S$ as follows
\begin{equation}
||\theta||^2_S := \frac{1}{|S|}\sum\limits_{i,j\in S, i\le j}(\theta_i-\theta_j)^2,
\label{def:semi_norm}
\end{equation}
where $|S|$ is the cardinality of the set $S$.
\end{define}

\begin{remark}
The open semi-ball $\Vert
\theta \Vert_S< R$ is a cylinder in ${\mathbb R}^N$ that is unbounded in 
$N-|S|+1$ directions and is bounded in the remaining $|S|-1$
directions. The unbounded directions correspond to the $N-|S|$
oscillators that are not phase-locked together with $1$ direction
corresponding to the
common translation
mode $\theta \mapsto \theta + \alpha {\hat{\textbf{1}}}$.

Note that when taking the universal set, i.e., $S = \Omega$, we have 
\begin{equation}
||\theta||_S^2 = \frac{1}{N}\sum\limits_{1\le i\le j\le
  N}(\theta_i-\theta_j)^2 =  \Vert\theta-\langle\theta\rangle \hat{\textbf{1}}\Vert^2.
\end{equation}
In this case, the semi-norm reduces to the usual $\ell_2$ norm modding out by
the translation degree of freedom. 
\end{remark}

Of course these are only semi-norms, not norms, as there is always at
least one null direction. However we will slightly abuse notation by
referring to sets $\Vert \theta \Vert_S < r $ as a ball of radius $r$
since the whole idea is to mod out what is happening in the null
directions. Having defined these semi-norms we can use this to define partial
phase-locking. 

\begin{define}
Let $\mathbb{T}^1 = (-\pi,\pi]$ be a torus and $\mathbb{T}^N$ a
$N$-dimensional torus. Denote 
\begin{itemize}
\item $|\theta_1-\theta_2|$: geodesic distance between $\theta_1 \in \mathbb{T}^1$ and $\theta_2 \in \mathbb{T}^1$.
\item ${\triangle}(\alpha,N) := \{(\theta_1,\theta_2,...,\theta_N) \in \mathbb{T}^N|\max_{i,j=1}^N |\theta_i-\theta_j| < \alpha\}$ for any $\alpha \in [0,\pi]$
\item $\bar{\triangle}(\alpha, N) := \{(\theta_1,\theta_2,...,\theta_N) \in \mathbb{T}^N|\max_{i,j=1}^N |\theta_i-\theta_j| \le \alpha\}$ for any $\alpha \in [0,\pi]$.
\end{itemize}

We say our model (\ref{eqn:Kuramoto}) achieves partial
phase-locking if for some constant vector $\theta^*\in\mathbb{T}^N$, there exists a subset of the oscillators $S$ in (\ref{eqn:Kuramoto}) such as the following is true: the translated phase vector $\tilde{\theta} := \theta - \theta^*$ satisfies $\limsup_{t \rightarrow \infty} |\tilde{\theta}_i(t) - \tilde{\theta}_j(t)| \leq \delta(N)$ for any $i, j \in S$ where $\delta(N) \rightarrow 0$ as $N \rightarrow \infty$. Roughly speaking, if an invariant ball exists for some oscillators while the other oscillators drift away, the dynamical system (\ref{eqn:Kuramoto})
achieves partial phase-locking. In particular, if $S=\{1,2,3,\ldots,N\}$, we say
(\ref{eqn:Kuramoto}) achieves full phase-locking. 
\end{define}

We will need the following result proved by Dorfler and Bullo \cite{Drfler2011}, restated here for convenience:

\begin{thm} [D\"orfler-Bullo] If $\gamma > \gamma_{\text{critical}} := \omega_{max}-\omega_{min}$, then the Kuramoto model (\ref{eqn:Kuramoto}) achieves full phase-locking and all oscillators eventually have a common frequency which is $\omega_{avg} = \frac{1}{N}\sum_{j=1}^N \omega_j$. Also, the set $\bar{\triangle}(\alpha,N)$ is positively invariant for every $\alpha\in[\alpha_{min},\alpha_{max}]$, and each trajectory starting in $\triangle(\alpha_{max},N)$ approaches asymptotically $\bar{\triangle}(\alpha_{min},N)$. Here, $\alpha_{min}$ and $\alpha_{max}$ are two angles which satisfy $\sin(\alpha_{min}) = \sin(\alpha_{max}) = \gamma_{\text{critical}}/\gamma$ and $\alpha_{min} \in [0,\pi/2), \alpha_{max} \in (\pi/2, \pi]$.
\label{thm:DB}
\end{thm}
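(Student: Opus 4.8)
The plan is to control the evolution of the geodesic diameter of the phase configuration and to show that it contracts whenever the coupling dominates the frequency spread. First I would record the common-frequency statement, which is immediate: summing (\ref{eqn:Kuramoto}) over $i$ and using the antisymmetry $\sin(\theta_i-\theta_j)=-\sin(\theta_j-\theta_i)$ gives
\[
\frac{d}{dt}\sum_{i}\theta_i=\sum_i\omega_i-\frac{\gamma}{N}\sum_{i,j}\sin(\theta_i-\theta_j)=\sum_i\omega_i=N\omega_{avg},
\]
so the mean phase rotates at rate $\omega_{avg}$, and once the phase differences freeze every oscillator must share this frequency. It is then convenient to pass to the rotating frame $\phi_i=\theta_i-\omega_{avg}t$, in which the equilibria of the reduced system are exactly the phase-locked configurations. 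Since the diameter is invariant under the common rotation, the contraction estimate itself may be carried out directly on (\ref{eqn:Kuramoto}).

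The key step is the differential inequality for $D(\theta):=\max_{i,j}|\theta_i-\theta_j|$, the length of the smallest arc containing all phases, valid while this arc has length below $\pi$ so that the ordering of phases is unambiguous. Choosing indices $M$ and $m$ that realize the maximum and minimum, and using the upper Dini derivative to handle the non-smoothness of the max, I would obtain
\[
D^+D\le(\omega_{max}-\omega_{min})-\frac{\gamma}{N}\sum_{j}\bigl[\sin(\theta_M-\theta_j)-\sin(\theta_m-\theta_j)\bigr].
\]
Writing $a=\theta_M-\theta_j\ge 0$ and $b=\theta_j-\theta_m\ge 0$ with $a+b=D$, the summand equals $\sin a+\sin b=2\sin\tfrac{D}{2}\cos\tfrac{a-b}{2}\ge 2\sin\tfrac{D}{2}\cos\tfrac{D}{2}=\sin D$ for $D\le\pi$, since $|a-b|\le D$. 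Summing the $N$ terms yields the clean bound
\[
D^+D\le\gamma_{\text{critical}}-\gamma\sin D.
\]

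With $\gamma>\gamma_{\text{critical}}$ we have $\gamma_{\text{critical}}/\gamma<1$, so the angles $\alpha_{min}\in[0,\pi/2)$ and $\alpha_{max}\in(\pi/2,\pi]$ with $\sin\alpha_{min}=\sin\alpha_{max}=\gamma_{\text{critical}}/\gamma$ are well-defined, and $\sin D>\gamma_{\text{critical}}/\gamma$ precisely on $(\alpha_{min},\alpha_{max})$. The stated consequences then follow: on the boundary $\{D=\alpha\}$ of $\bar\triangle(\alpha,N)$ with $\alpha\in[\alpha_{min},\alpha_{max}]$ one has $D^+D\le\gamma_{\text{critical}}-\gamma\sin\alpha\le0$, so no trajectory can exit and $\bar\triangle(\alpha,N)$ is positively invariant; and for $D\in(\alpha_{min},\alpha_{max})$ the right-hand side is strictly negative and bounded away from zero on compact subintervals, so a standard ODE comparison argument shows $D(t)$ decreases monotonically into $\bar\triangle(\alpha_{min},N)$, giving the asymptotic attraction of $\triangle(\alpha_{max},N)$ to $\bar\triangle(\alpha_{min},N)$.

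It remains to upgrade ``the diameter stays small'' to genuine full phase-locking, and this is the step I expect to be the main obstacle. Inside $\bar\triangle(\alpha_{min},N)$ all pairwise phase differences lie below $\alpha_{min}<\pi/2$. Here I would exploit the gradient structure of the rotating-frame system, $\dot\phi=-\nabla E(\phi)$ with the real-analytic potential $E(\phi)=-\sum_i(\omega_i-\omega_{avg})\phi_i+\frac{\gamma}{2N}\sum_{i,j}\bigl(1-\cos(\phi_i-\phi_j)\bigr)$. Since the flow is a gradient flow confined to the compact positively invariant set $\bar\triangle(\alpha_{min},N)$, LaSalle's invariance principle forces convergence to the set of equilibria, and the fact that the Jacobian is negative definite modulo the translation direction whenever all phase differences are below $\pi/2$ (the couplings $\cos(\phi_i-\phi_j)$ being positive there) shows the equilibrium is unique and asymptotically stable on the quotient. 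The trajectory therefore converges to this unique phase-locked configuration, at which $\dot\theta_i=\omega_{avg}$ for every $i$. The delicate points are the careful justification of the Dini-derivative computation and the passage from the Lyapunov/LaSalle convergence-to-equilibria to uniqueness of the limiting configuration, rather than the diameter estimate itself, which is essentially algebraic.
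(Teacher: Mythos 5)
Your argument is correct, but note that the paper does not prove this statement at all: it is explicitly restated from D\"orfler and Bullo \cite{Drfler2011} for convenience, so there is no internal proof to compare against. What you have written is essentially the original D\"orfler--Bullo argument --- the Dini-derivative contraction estimate $D^+D \le \gamma_{\text{critical}} - \gamma\sin D$ for the arc length, followed by the gradient-flow/LaSalle step (using positivity of $\cos(\phi_i-\phi_j)$ inside $\bar{\triangle}(\alpha_{\min},N)$ to get uniqueness of the equilibrium modulo rotation) --- and both halves, including the technical points you flag, are handled correctly.
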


In order to state our main theorem we first need to define two
functions $g(K,N)$ and $h(K,N)$ that will prove important to the
subsequent analysis: 
 
\begin{define}
 For the Kuramoto model (\ref{eqn:Kuramoto}) with natural frequencies
 $\{\omega_i\}_{i=1}^N$, define two functions:
\begin{align}
&g(K,N) = \min\limits_{S\subset \Omega, |S|=N-K} \max\limits_{i,j\in S} |\omega_i-\omega_j|, \label{def:g} \\
&h(K,N) =  \frac{(N-K)}{N}\sqrt{1-\frac{2K}{(N-K)^{1/2}}}.\label{def:h}
\end{align}
We note that $g(K,N)$ depends implicitly on the set of natural
frequencies $\{\omega_i\}_{i=1}^N,$ and represents the minimum spread
in frequencies over subsets of size $N-K$. The function $h(K,N)$ will
arise in the subsequent analysis and $\gamma h(K,N)$ represents an
estimate of the maximum spread in frequencies for $N-K$ oscillators to be phase-locked. Note that $h(K,N)$ is only defined for $K \leq \frac{\sqrt{16N+1}-1}{8}
\approx 0.5 N^{\frac{1}{2}}$. 
\label{def:g_and_h}
\end{define}

With Definition \ref{def:g_and_h} we are ready to state our main theorem, which gives a sufficient condition on the existence of partially phase-locked states. 

\begin{thm}
\label{thm:Main1}
  Suppose that there exists some integer $K\le \frac{\sqrt{16N+1}-1}{8}$ such
that  $g(K,N) < \gamma h( K, N)\}$, then for some constant vector $\theta^*\in\mathbb{T}^N$, there exists a subset of oscillators $S$ with $|S|=N-K$ such that
\begin{enumerate}
\item {\bf INVARIANCE} There exists a constant $R$ with $R=O(1)$ such that every oscillator with the initial phase condition $||\theta(0)-\theta^*||_S < R$ satisfies $||\theta(t)-\theta^*||_S < R$ for all $t>0$. In other words, the ball $||\theta(t)-\theta^*||_S < R$ is invariant in forward time. 
\item {\bf CONVERGENCE} There exists a constant
  $r=O(\frac{1}{\sqrt{N}})\ll R$ such that
  orbits that begin in the larger ball  $||\theta(0)-\theta^*||_S < R$
  converge to the smaller ball $||\theta(t)-\theta^*||_S < r$ asymptotically.
\end{enumerate} 

\begin{figure} [h]
\centering
      \centering
      \includegraphics[width=.4\linewidth]{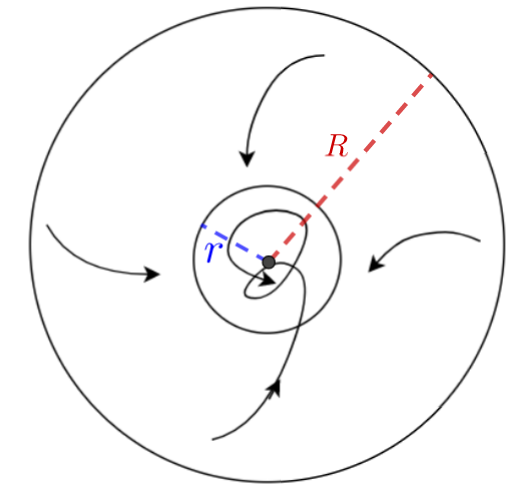}
\caption{Attracting and invariant balls for a subset of $N-K$
  oscillators}
\label{fig:Balls}
\end{figure}

\end{thm}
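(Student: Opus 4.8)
The plan is to treat the $N-K$ oscillators of smallest frequency spread as a self-contained Kuramoto system subject to a bounded, time-dependent forcing coming from the remaining $K$ oscillators, and to run the coercivity/energy argument sketched in the introduction with the semi-norm $\|\cdot\|_S$ playing the role of $\|{\bm y}\|$. First I would fix the subset: by hypothesis there is a $K$ with $g(K,N)<\gamma h(K,N)$, so let $S$ be a minimizer in \eqref{def:g}, giving $|S|=N-K$ and $\max_{i,j\in S}|\omega_i-\omega_j|=g(K,N)$. For $i\in S$ I split the coupling sum into an internal part over $S$ and an external part over $\Omega\setminus S$,
\begin{equation*}
\dot\theta_i = \omega_i - \frac{\gamma}{N}\sum_{j\in S}\sin(\theta_i-\theta_j) + p_i(t),\qquad p_i(t) := -\frac{\gamma}{N}\sum_{j\notin S}\sin(\theta_i-\theta_j),\quad |p_i|\le \frac{\gamma K}{N}.
\end{equation*}
The internal part is itself a Kuramoto system on $N-K$ nodes with effective coupling $\gamma(N-K)/N$ and frequency spread $g(K,N)$. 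Since $g(K,N)<\gamma h(K,N)<\gamma(N-K)/N$, Theorem~\ref{thm:DB} applies to this subsystem and yields, in the co-rotating frame, a stable phase-locked configuration $\theta^*$ with all internal phase differences in $[\alpha_{\min},\alpha_{\max}]$, $\alpha_{\min}<\pi/2$. This $\theta^*$ will be the center of the two balls in the statement.

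Next I would take $V(t):=\|\theta(t)-\theta^*\|_S^2$, which by the standard variance identity equals $\sum_{i\in S}(\tilde\theta_i-\langle\tilde\theta\rangle)^2$ with $\tilde\theta=\theta-\theta^*$ and $\langle\tilde\theta\rangle$ the mean over $S$, as a Lyapunov function. Differentiating, subtracting the fixed-point relation for $\theta^*$, and applying the mean value theorem to each $\sin(\theta_i-\theta_j)-\sin(\theta^*_i-\theta^*_j)$ turns the internal part into a graph-Laplacian quadratic form; symmetrizing in $i,k$ gives
\begin{equation*}
\tfrac12\dot V = -\frac{\gamma}{2N}\sum_{i,k\in S}\cos(\xi_{ik})(\tilde\theta_i-\tilde\theta_k)^2 + \sum_{i\in S}(\tilde\theta_i-\langle\tilde\theta\rangle)\,p_i .
\end{equation*}
On a ball $\sqrt V<R$ chosen small enough that every $\xi_{ik}$ stays below $\pi/2$, the cosines are bounded below by some $\mu>0$, and the identity $\sum_{i,k\in S}(\tilde\theta_i-\tilde\theta_k)^2=2(N-K)V$ converts the first term into a coercive $-c\,V$ with $c\gtrsim \gamma\mu(N-K)/N$. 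Bounding the forcing term by Cauchy--Schwarz against $\|p\|_S$ yields an inequality of the form $\tfrac12\dot V\le -cV+F\sqrt V$, from which $\sqrt V$ strictly decreases on any annulus $r<\sqrt V<R$ with $r\sim F/c$. Invariance of the ball of radius $R$ and asymptotic convergence into the ball of radius $r$ then follow from a standard trapping/comparison argument, and translating the semi-norm bound back to pairwise differences via $|\tilde\theta_i-\tilde\theta_j|\le\sqrt2\,\sqrt V$ gives the partial phase-locking conclusion.

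The decisive step, and the main obstacle, is to make the forcing estimate sharp enough to keep $R=O(1)$ inside the coercive regime while driving the attracting radius down to the claimed $r=O(N^{-1/2})$, and to produce exactly the threshold encoded by $h$. The key algebraic fact is that dividing $g(K,N)<\gamma h(K,N)$ by $\gamma(N-K)/N$ gives $\sin\alpha<\sqrt{1-2K/\sqrt{N-K}}$, i.e. $\cos^2\alpha>2K/\sqrt{N-K}$, where $\sin\alpha=gN/(\gamma(N-K))$ is the subsystem phase-lock angle and $\cos\alpha$ controls the coercivity $\mu$. This is precisely the statement that the coercivity dominates the reduction caused by the external coupling, so the trapping annulus $[r,R]$ is nonempty; correspondingly, the constraint $K\le(\sqrt{16N+1}-1)/8$ is exactly the requirement $2K/\sqrt{N-K}<1$ that this room be positive. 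I expect essentially all the work to lie in the careful separation of $p_i$ into a $\tilde\theta$-linear piece, which renormalizes $c$, and a residual piece, which fixes the final radius $r$, together with verifying the geometry of the coercive region on $\mathbb{T}^1$; once $\tfrac12\dot V\le -cV+F\sqrt V$ is established with the correct constants, the invariance and convergence claims are routine.
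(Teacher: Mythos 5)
Your proposal is correct and follows essentially the same route as the paper: the same internal/external splitting with forcing bounded by $\gamma K/N$, the same use of the D\"orfler--Bullo theorem to obtain the stable subsystem equilibrium $\theta^*$, the same Lyapunov argument in the semi-norm $\Vert\cdot\Vert_S$ yielding a differential inequality whose coercivity constant is controlled by $\min_{i,j}\cos(\theta_i^*-\theta_j^*)\ge\sqrt{1-g^2N^2/\gamma^2(N-K)^2}$ (this is exactly how the paper bounds $|\lambda_2|$ from below), and the same identification of $g<\gamma h$ and $K\le(\sqrt{16N+1}-1)/8$ as the conditions making the trapping annulus nonempty. The only organizational difference is that the paper factors the energy estimate through a separately stated perturbation proposition and controls the nonlinearity by a second-order Taylor expansion with an explicit cubic remainder $\gamma\Vert\tilde\theta\Vert^3$, rather than by your mean-value-theorem form with $\cos(\xi_{ik})$; this is a cosmetic distinction.
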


\begin{remark} 
We make a few remarks about this theorem. Firstly, we can actually derive analytical expressions for the sizes of the invariant and attracting balls, which are $r=\frac{2K\gamma(N-K)^{1/2}}{N|\lambda_2|}$ and $R =
\frac{N|\lambda_2|}{(N-K)\gamma}$. Here, $\lambda_2$ is the second largest eigenvalue of the Jacobian matrix of Equation (\ref{eqn:Kuramoto}) at $\theta^*$. Note that $\lambda_2$ depends implicitly on $\gamma$, and as $\gamma$ increases we expect $\lambda_2$ to become more negative.

Secondly, The integer $K$ represents the number of free or non-phase-locked oscillators. The function $h(K,N)$ is only defined for $K \leq \frac{\sqrt{16N+1}-1}{8} \approx 0.5 N^{\frac{1}{2}}$ for $N$ large, so this theorem can only
guarantee the existence a subset of mutually phase-locked
oscillators with $K \lesssim 0.5N^{\frac{1}{2}}$ oscillators drifting
away. The constant can probably be improved but we think it unlikely
that the scaling 
can be improved without substantially changing the approach. 

Typically we will have $g(K,N)< \gamma h(K,N)$ in an interval, so
there will be a range of integers $K$ for which the inequality is
satisfied. In this situation, we would be primarily interested in the
smallest such $K$ that satisfies the inequality, as this would
represent the largest partially phase-locked cluster. We denote such
$K$ as $K^*$. In other words  $K*$ is the infimum over all  $K$ such
that the inequality $g(K,N)<\gamma h(K,N)$ holds.

When $K=0$, corresponding to no free oscillators, the condition on $\gamma$ in this theorem reduces to  $\gamma_{\text{critical}}(0) = \omega_{max}-\omega_{min}$, which coincides with Theorem \ref{thm:DB} of Dorfler and Bullo\cite{Drfler2011}. Thus this theorem can be viewed as a generalization of their result to the case of partial phase-locking.  
\end{remark}

\section{Proof of Theorem \ref{thm:Main1}}

In this section, we prove our first main result.  A brief sketch of the main idea of the proof is as follows: we first prove a standard proposition: If we take the Kuramoto model in a parameter regime where there is a stable fixed point and
we add a small perturbation, then there is an attracting ball of small radius around the former fixed point. In particular, any initial conditions which begin near the fixed point remain so for all time. We then use this result to study partial phase-locking by considering subsets of oscillators that could potentially phase locked, and considering the remaining oscillators as a perturbation to these candidates for partial phase-locking.

\begin{define}
We say $\theta^*$ is a stable phase-locked solution of
Equation (\ref{eqn:Kuramoto})  with frequencies $\bm{\omega}=(\omega_1, \omega_2, ..., \omega_N)^T$ if
it satisfies 
(\ref{eqn:Kuramoto}) 
\begin{align}
\omega_i = \frac{\gamma}{N} \sum_{j} \sin(\theta_i^* - \theta_j^*)
\end{align}
 and  $J$, the Jacobian matrix at $\theta^*$, i.e,
\[
    J_{ij}(\theta^*) = \begin{cases}
    \frac{\gamma}{N}\cos(\theta_i^*-\theta_j^*), & i \not= j,\\
    -\frac{\gamma}{N}\sum_{k\not=i}\cos(\theta_i^*-\theta_k^*), &  i=j
    \end{cases} 
    \] 
is negative semi-definite with a one dimensional kernel. 
\end{define}

\begin{prop}
  Suppose $\theta^*$ is a stable phase-locked solution. Consider the following perturbed Kuramoto model with perturbation $\epsilon f_i$:
\begin{align}
    \dot{\theta_i} = \omega_i - \frac{\gamma}{N} \sum_{j} \sin(\theta_i - \theta_j) + \epsilon f_i(\theta,t),  \qquad  i = 1,2,...N,
    \label{eqn:pert_Kura}
\end{align}
where $\epsilon$ is a small constant and $f_i's$ are functions bounded by a constant $C$, i.e., $\max\limits_{\theta, t, i} |f_i(\theta, t)|\le C$. Let
\begin{equation}
 \left\{
   \begin{array}{c}
   r(\epsilon) = 2\epsilon CN^{1/2}/|\lambda_2|  \\
   R = |\lambda_2|/\gamma ,\\
   \end{array}
 \right.
\end{equation}
where $\lambda_2<0$ is the second largest eigenvalue of the Jacobian matrix of (\ref{eqn:Kuramoto}) at $\theta^*$. Then for $\epsilon < |\lambda_2|^2/(2CN^{1/2}\gamma)$, the following statements hold:\\
(1) The ball $||\theta(t)-\theta^*||_N < r(\epsilon)$ is invariant in forward time.\\
(2) Every solution with $||\theta(0)-\theta^*||_N < R$ asymptotically converges to the above invariant ball with radius $r(\epsilon)$.
\label{prop:ball}
\end{prop}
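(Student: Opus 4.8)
The plan is to use $V(t)=\tfrac12\|\theta(t)-\theta^*\|_N^2$ as a Lyapunov-type functional and to reduce the statement to a scalar differential inequality of logistic type for $u(t)=\|\theta(t)-\theta^*\|_N$. First I would set $y=\theta-\theta^*$ and recall that, for $S=\Omega$, the semi-norm is $\|y\|_N^2=\|Py\|^2$ with $P=I-\tfrac1N\hat{\textbf{1}}\hat{\textbf{1}}^\top$ the orthogonal projection onto $\hat{\textbf{1}}^\perp$; working with $V=u^2$ rather than $u$ avoids the non-smoothness of the norm at the origin. Writing the coupling field as $G_i(\theta)=-\tfrac{\gamma}{N}\sum_j\sin(\theta_i-\theta_j)$, the fixed-point relation makes the $\omega_i$ cancel, so $\dot y=G(\theta^*+y)-G(\theta^*)+\epsilon f$, and with $z:=Py$,
\begin{equation*}
\tfrac12\tfrac{d}{dt}\|y\|_N^2 = z^\top\bigl(G(\theta^*+y)-G(\theta^*)\bigr) + \epsilon\, z^\top f.
\end{equation*}

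Next I would linearize exactly via the integral mean-value form $G(\theta^*+y)-G(\theta^*)=\bar J\,y$, where $\bar J=\int_0^1 J(\theta^*+sy)\,ds$ is symmetric and annihilates $\hat{\textbf{1}}$ (every Kuramoto Jacobian has zero row sums), so that $\bar J=P\bar J=\bar J P$ and $z^\top\bar J y=z^\top\bar J z$. The essential tool is the graph-Laplacian identity
\begin{equation*}
z^\top J(\phi)\, z = -\frac{\gamma}{2N}\sum_{i,j}\cos(\phi_i-\phi_j)(z_i-z_j)^2,
\end{equation*}
applied both to $J(\theta^*)$ and under the $s$-integral. Splitting $\bar J=J(\theta^*)+\bigl(\bar J-J(\theta^*)\bigr)$, the first piece obeys $z^\top J(\theta^*)z\le \lambda_2\|z\|^2$ because $z\perp\hat{\textbf{1}}$ and $J(\theta^*)$ is negative semidefinite with one-dimensional kernel, so its largest eigenvalue on $\hat{\textbf{1}}^\perp$ is $\lambda_2<0$. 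For the deviation I would use $|\cos a-\cos b|\le|a-b|$ together with $\max_{i,j}|y_i-y_j|=\max_{i,j}|z_i-z_j|\le 2\|z\|$ and $\sum_{i,j}(z_i-z_j)^2=2N\|z\|^2$ to obtain a cubic bound $|z^\top(\bar J-J(\theta^*))z|\lesssim \gamma\|y\|_N^3$. Combining gives the uniform quadratic-form estimate $z^\top\bar J z\le(\lambda_2+c\,\gamma\|y\|_N)\|y\|_N^2$ for an absolute constant $c$.

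Bounding the forcing by $\epsilon\, z^\top f\le \epsilon\|z\|\,\|f\|\le \epsilon C N^{1/2}\|y\|_N$ and writing $u=\|y\|_N$, I arrive at the scalar comparison inequality
\begin{equation*}
\tfrac12\tfrac{d}{dt}u^2 \le (\lambda_2+c\,\gamma u)\,u^2 + \epsilon C N^{1/2}\,u.
\end{equation*}
The radius $R=|\lambda_2|/\gamma$ is precisely the value below which the bracket $\lambda_2+c\gamma u$ stays negative (with the constant normalized to one), so inside the large ball the unforced dynamics is non-expansive in this semi-norm, and $r=2\epsilon C N^{1/2}/|\lambda_2|$ is the scale set by balancing the linear damping against the forcing. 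The threshold $\epsilon<|\lambda_2|^2/(2CN^{1/2}\gamma)$ is exactly the condition guaranteeing $r<R$. A standard trapping argument then yields both conclusions: on $\{u=r\}$ the right-hand side is negative, so $\{u<r\}$ is forward invariant, while on the annulus separating the two radii the right-hand side remains strictly negative, forcing $\limsup_{t\to\infty}u(t)\le r$ for orbits launched in $\{u<R\}$.

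The main obstacle is the uniform quadratic-form estimate in the middle paragraph: one must control the averaged Jacobian $\bar J$ over the entire ball and confirm that the cubic correction coming from $\bar J-J(\theta^*)$ does not overwhelm the negative-definite linear term $\lambda_2\|z\|^2$ before the radius reaches $R$. Making the absolute constants in this bound line up so that the invariance and convergence radii agree with the stated $r(\epsilon)$ and $R$, and so that the trapping inequality is strict on the relevant annulus under the given smallness condition on $\epsilon$, is where the care lies; the remaining comparison-ODE bookkeeping is routine.
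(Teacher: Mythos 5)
Your proposal is correct and follows essentially the same route as the paper: a Lyapunov computation on $\Vert\theta-\theta^*\Vert_\Omega^2$ leading to the scalar inequality $\frac{d}{dt}u^2 \le 2\lambda_2 u^2 + O(\gamma)\,u^3 + 2\epsilon C N^{1/2}u$, followed by a trapping/Gronwall argument on the annulus $r<u<R$; the paper merely packages the nonlinear remainder as a second-order Taylor term $-\sin(\xi_{ij})(\tilde\theta_j-\tilde\theta_i)^2/2$ rather than through the averaged Jacobian $\bar J=\int_0^1 J(\theta^*+sy)\,ds$. The constant-matching concern you flag at the end is real (your cubic coefficient comes out as $2\gamma$ rather than $\gamma$, which would halve $R$), but it is no worse than the paper's own appendix, whose displayed intermediate bound for the cubic term is itself loose by an $O(1)$ factor.
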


\noindent
\begin{proof}
We will make a  standard Lyapunov function calculation: the proof is
sketched here, with details relegated to the Appendix.
We will represent  $\theta$  as  $\theta = \theta^* +
\tilde{\theta}$. Note that by rotational invariance we can assume $\tilde\theta$ is mean zero. Also note that the norm $\Vert \cdot \Vert_\Omega^2$ is equivalent to the standard Euclidean norm $\Vert \cdot \Vert^2$ on the subspace of mean zero functions: if $\tilde \theta$ has mean zero and
$\hat{\boldsymbol{1}}$ is the vector of all ones then $\Vert
\tilde \theta + \alpha \hat{\boldsymbol{1}}\Vert_\Omega^2 = \Vert \tilde \theta
\Vert_\Omega^2 = \sum \tilde \theta_i^2.$ 

First note that we have an upper bound on
$\frac{d}{dt}{||\tilde{\theta}||^2}$ of the following form:
$$\frac{d}{dt}{||\tilde{\theta}||^2} \le 2\lambda_2 ||\tilde{\theta}||^2 + \gamma ||\tilde{\theta}||^3 + 2\epsilon CN^{1/2} ||\tilde{\theta}||.$$
To make $\frac{d}{dt}{||\tilde{\theta}||^2}$ negative, it suffices to require 
\begin{equation}
 \left\{
   \begin{array}{c}
   2\epsilon C N^{1/2} ||\tilde{\theta}|| < |\lambda_2| ||\tilde{\theta}||^2  \\
  \gamma||\tilde{\theta}||^3 < |\lambda_2| ||\tilde{\theta}||^2, \\
   \end{array}
 \right.
\end{equation}

which is equivalent to 
\begin{equation}
\frac{2\epsilon CN^{1/2}}{|\lambda_2|} < ||\tilde{\theta}|| < \frac{|\lambda_2|}{\gamma}.
\end{equation}

Let $r(\epsilon) = 2\epsilon CN^{1/2}/|\lambda_2|$ and $R =|\lambda_2|/\gamma$, then by Gronwall's inequality\cite{Khalil1993}, the semi-norm of $\tilde{\theta}$ is
exponentially decreasing when $\tilde{\theta}$ is in the annulus of
radii $r(\epsilon)$ and $R$, and then stays in the ball of radius
$r(\epsilon)$ forever. So statements (1) and (2) are proved.
\end{proof}

Now, we use Proposition \ref{prop:ball} to prove Theorem \ref{thm:Main1}.

\begin{proof}
For any integer $0\leq K < N$, consider
$N-K$ oscillators in the Kuramoto model (\ref{eqn:Kuramoto}). By
changing the order of labels, we can, without loss of generality,
focus on the first $N-K$ oscillators and study the conditions under
which they will stably phase-lock.  The evolution
can be written as follows
\begin{align}
\dot \theta_i &= \omega_i - \frac{\gamma}{N} \sum_{i=1}^N \sin(\theta_i - \theta_j) \\
  &= \omega_i - \frac{\gamma}{N} \sum_{j=1}^{N-K} \sin(\theta_i - \theta_j) - \frac{\gamma}{N} \sum_{j=N-K+1}^{N} \sin(\theta_i -
    \theta_j) \\
  &= \omega_i - \frac{\tilde\gamma}{N-K} \sum_{j=1}^{N-K} \sin(\theta_i - \theta_j) + \epsilon f_i
\end{align}
where $\tilde\gamma= \gamma\frac{N-K}{N}$ is a modified coupling strength on the first $N-K$ oscillators and $\epsilon f_i$ represents the effect of the remaining $K$
oscillators. Then we have $\epsilon = \frac{\gamma}{N}$, $f_i = \sum_{j=N-K+1}^{N} \sin(\theta_j - \theta_i) \le K$. The strategy is to treat the effect of the remaining
$K$ oscillators as a perturbation and then apply Proposition (\ref{prop:ball}). 

We first consider the unperturbed problem
\begin{align} 
    \dot{\theta_i} = \omega_i - \frac{\tilde{\gamma}}{N-K} \sum_{j=1}^{N-K} \sin(\theta_i - \theta_j), \qquad   i = 1,2,...,N-K. \label{eqn:unpert}
\end{align} 
Define
\[ \gamma_0 = \max_{i,j=1}^{N-K} |\omega_i-\omega_j|.
 \] By Theorem
\ref{thm:DB} if the spread in frequencies satisfies 
\begin{equation}
 \gamma_0 < \tilde{\gamma} = \frac{\gamma(N-K)}{N},
\end{equation}
then Equation (\ref{eqn:unpert}) phase-locks; the
set $\bar{\triangle}(\alpha)$ is positively invariant for every
$\alpha \in [\alpha_{min},\alpha_{max}]$, and each trajectory starting
in $\triangle(\alpha_{max})$ approaches asymptotically
$\bar{\triangle}(\alpha_{min})$, where $\alpha_{min}\in [0,\pi/2)$,
$\alpha_{max}\in (\pi/2, \pi]$ and $\sin(\alpha_{min}) =
\sin(\alpha_{max}) = \frac{\gamma_0}{\tilde{\gamma}}$. From these, it
is clear to see that under a rotating frame with frequency
$\omega_{avg}$, Equation (\ref{eqn:unpert})
has a fixed point $\theta^*$ such that $\theta^* \in \bar{\triangle}(\alpha_{min})$.

Suppose $L$ is the Jacobian matrix of (\ref{eqn:unpert}) at the fixed point $\theta^*$, i.e,
\[
    L_{ij} = \begin{cases}
    \frac{\tilde \gamma}{N-K} \cos(\theta_i^* -
                   \theta_j^*), & i \not= j,\\
     - \frac{\tilde \gamma}{N-K}\sum_k
                   \cos(\theta_i^* - \theta_k^*),
                   & i = j.
    \end{cases} 
    \] 

Since $\theta^* \in \bar{\triangle}(\alpha_{min})$ and $\alpha_{min}
\in [0,\frac{\pi}{2})$, we have $\cos(\theta^*_i-\theta^*_j) > 0$ and
$L$ is a negative semidefinite Laplacian matrix with eigenvalues
$\lambda_1 = 0 > \lambda_2 \ge \lambda_3 \ge ... \ge
\lambda_{N-K}$, so the solution is stably phase-locked.  

We next consider the effects of the perturbation terms $\epsilon f_i$ where $\epsilon = \frac{\gamma}{N}$, $f_i =
\sum_{j=N-K+1}^{N} \sin(\theta_j - \theta_i) \le K$. Proposition
\ref{prop:ball} guarantees the existence of an invariant ball for the
first $N-K$  oscillators when 
\begin{equation}
\epsilon = \frac{\gamma}{N} <
       \frac{|\lambda_2|^2}{2K(N-K)^{1/2}\tilde{\gamma}},
\end{equation}
or equivalently, when
\begin{equation}
    \gamma < \sqrt{\frac{1}{2K(N-K)^{3/2}}}\cdot N|\lambda_2|.
    \label{eqn:Condition}
\end{equation}

The eigenvalue $\lambda_2$ depends implicitly on $\gamma$ so we
need a lower bound on the magnitude of $\lambda_2$ in order to close
the argument and guarantee that (\ref{eqn:Condition}) can be satisfied.
Since the kernel of $L$ is spanned by $(1,1,1,\ldots,1)$
we can consider the operator $-L$ acting on the space of mean-zero
vectors. For any $x$ with $\sum\limits_i x_i = 0$, we have, on the one hand,
\begin{align*}
  x^{T}(-L)x &= \frac{\gamma}{N}\sum\limits_{i,j} \cos(\theta_i^* - \theta_j^*) x_{i}^2 - \frac{\gamma}{N}\sum\limits_{i,j}
 \cos(\theta_i^* - \theta_j^*) x_{i} x_{j} \\
    &= \frac{\gamma}{2N}\sum\limits_{i,j} \cos(\theta_i^* - \theta_j^*)(x_i-x_j)^2 \\
    & \geq  \frac{\gamma}{2N}\min_{i,j} \cos(\theta_i^* - \theta_j^*)\sum\limits_{i,j} (x_i-x_j)^2\\
    & = \frac{\gamma}{N} \min_{i,j} \cos(\theta_i^* - \theta_j^*)\left((N-K)\sum\limits_{i}x_i^2-\sum\limits_{i,j}x_ix_j\right)\\
    & = \frac{\gamma(N-K)}{N} \min_{i,j} \cos(\theta_i^* - \theta_j^*)\Vert x\Vert^2\\
  & \geq  \frac{\gamma(N-K)}{N}  \sqrt{1-\frac{\gamma_0^2}{\tilde{\gamma}^2}} \Vert x\Vert^2.
  \end{align*}
  On the other hand, 
  \begin{align*}
      x^{T}(-L)x \le \frac{\gamma}{2N}\sum_{i,j}(x_i-x_j)^2 = \frac{\gamma(N-K)}{N}\Vert x\Vert^2.
  \end{align*}
 Therefore we have the inequality
\begin{align}
\frac{\gamma(N-K)}{N} \sqrt{1-\frac{\gamma_0^2}{\tilde{\gamma}^2}} \le |\lambda_2| \le \frac{\gamma(N-K)}{N}.\label{eqn:LambdaIEQ} 
\end{align}

Combining Equations (\ref{eqn:Condition}) and (\ref{eqn:LambdaIEQ}), we can conclude that an invariant ball for the first $N-K$ oscillators with radius $R=\frac{|\lambda_2|N}{\gamma(N-K)}$ exists when 
\begin{equation}
\gamma_0 = \max_{i,j=1}^{N-K} |\omega_i-\omega_j| < \tilde{\gamma}\sqrt{1-\frac{2K}{(N-K)^{1/2}}}.
\label{ineqn: g<h}
\end{equation}
Therefore, we have proven the first part of Theorem \ref{thm:Main1}. In fact, since the above argument holds regardless of the subset of oscillators we choose, we can go through every subset holding $N-K$ elements and target the one with the smallest $K$ such that (\ref{ineqn: g<h}) holds. So we derive a sufficient condition: $g(K,N)\le \gamma h(K,N)$, where functions $g$ and $h$ are respectively defined in (\ref{def:g}) and (\ref{def:h}). The existence of an invariant ball of $N-K^*$ oscillators where 
\begin{equation}
    K^* := \min\limits_K\{K\in \mathbb{N}:g(K,N) \le \gamma h(K,N)\}
    \label{eqn:Kstar}
\end{equation}
is guaranteed.

Similarly as Proposition \ref{prop:ball}, it can be concluded that if $||\theta(0)-\theta^*||_S < R$, then all the oscillators in $S$ asymptotically converges to the invariant ball $||\theta(t)-\theta^*||_S < r$, where $r=\frac{2
  \gamma K(N-K)^{1/2}}{N|\lambda_2|}$. Therefore, we have a proof for the second part of Theorem \ref{thm:Main1}.
\end{proof}

\section{Numerical Examples}

In this section we present several numerical experiments on the Kuramoto
model (\ref{eqn:Kuramoto}) to illustrate our first theorem.
In the first two experiments all of the oscillator frequencies are
chosen to be i.i.d. Gaussian random variables with small variance
except for one or two whose natural frequency is chosen to be
large compared with
the other oscillators.
In the last experiment we consider a case where all oscillators have independent Cauchy distributed natural frequencies.

\begin{example}[One free oscillator]

The first experiment depicts a case with $N=20$ oscillators with coupling strength $\gamma=1$. The frequencies $\omega_1, \omega_2, ..., \omega_{19}$ are chosen to be normal random variables with mean 0 and variance
$\frac{\gamma}{N}$, and the frequency $\omega_{20}$ is chosen to be $\gamma+0.1$. One can easily check from the definition that $K^*=1$, meaning there exists at most one free oscillator. The cluster of nineteen phase-locked 
oscillators eventually moves at a common angular frequency $\bar{\omega}$. We use the change of variables $\tilde{\theta}_i(t) = \theta_i(t)-\bar{\omega}t$ for $i=1,2,...,N$ to work in a frame of reference corotating with the phase-locked cluster. 
With a slight abuse of notation, we rewrite $\tilde{\theta}$ as $\theta$. The left graph in Figure \ref{fig:free_1_N_20_12} exhibits the evolution of the phases $\theta_i$'s on the real line with respect to time $t$ under the rotation frame; the right graph 
represents the phase trajectories on the torus. It can be seen that, as expected, there exists a phase-locked cluster of 19 oscillators depicted by the blue curves, and a single free oscillator whose trajectory is depicted by the red curve.

\begin{figure} [H]
        \includegraphics[width=2.3in, height=1.8in]{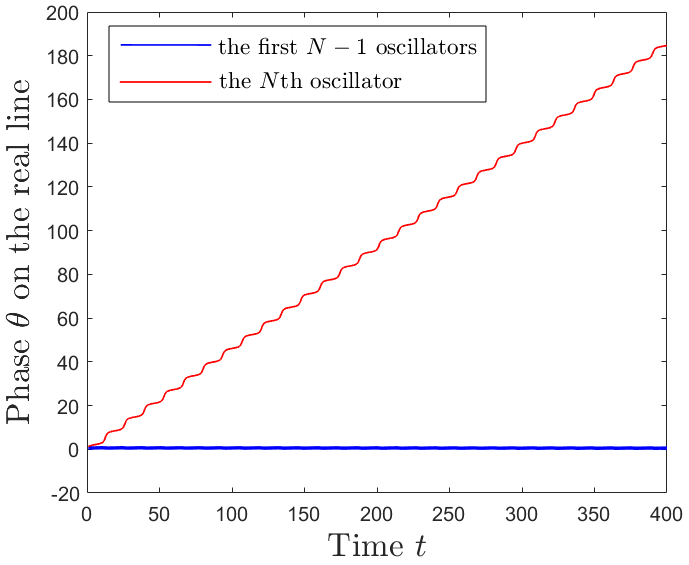} 
        \includegraphics[width=2.3in, height=1.8in]{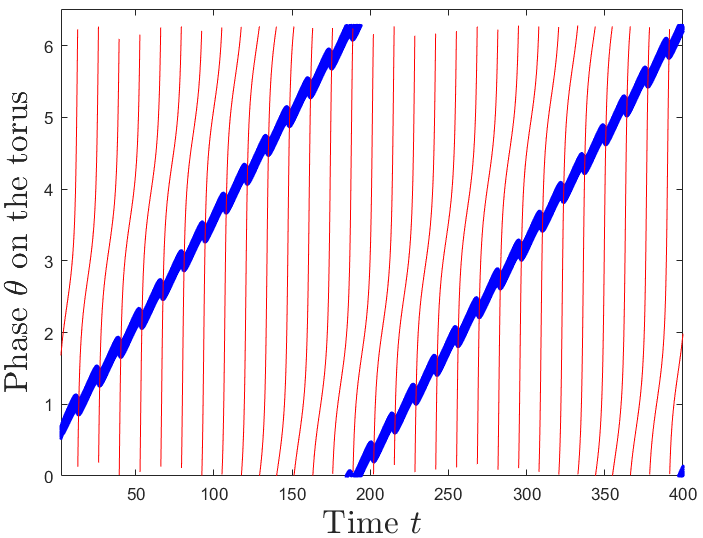}
 \caption{A cluster of 19 phase-locked oscillators and 1 free oscillator.}
\label{fig:free_1_N_20_12}
\end{figure}

Figure \ref{fig:free_1_N_20_3} represents the same experiment from Figure \ref{fig:free_1_N_20_12}, but we have moved to a frame that is co-rotating with the phase-locked cluster and rescaled the graph to more clearly represent the dynamics of the cluster. 
 One can clearly see that after an initial transient the phase-locked cluster settles down to something that appears to be periodic. It is clear that there is a periodic disturbance of the cluster when the free oscillator passes through, although this is not sufficient to break up the cluster. To make this a little more precise, we first computed the frequency of the free oscillator via $\omega_{\text{eff}} = \frac{\theta(T)-\theta(T/2)}{T/2}$ where $T=1000$ is the total running time. This calculation gave $\omega_{\text{eff}} = 0.4483$. Next we took the Fourier transform of the trajectory of one of the oscillators in the locked cluster, excluding the initial transient region. The results are depicted in  Figure \ref{fig:free_1_N_20_3}, which shows the one-sided spectral power density for a single trajectory. One can see that the trajectories are effectively periodic -- the  spectrum has peaks at integer multiples of fundamental frequency $\xi \approx .0714$, and that $\omega_{\text{eff}}\approx 2\pi \xi$, as expected. 

\begin{figure}[H]
    \centering
    \includegraphics[width=2.8in, height=2.2in]{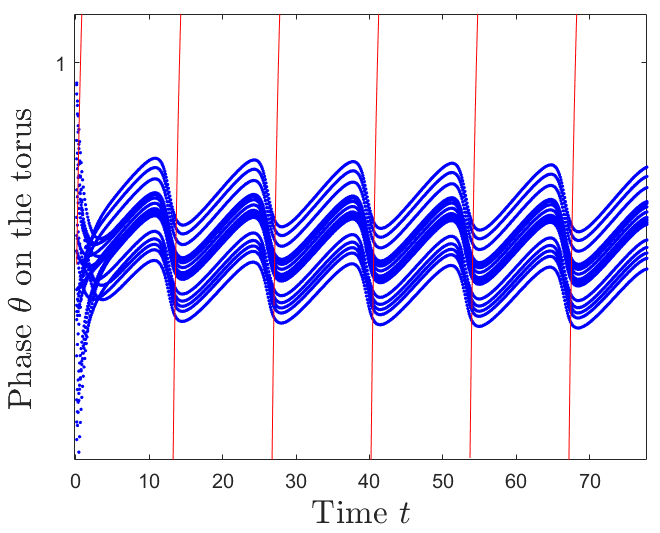}
\caption{Phase trajectories on the torus under a rotated frame}
\label{fig:free_1_N_20_3}
\end{figure}

\begin{figure} [H]
    \centering
    \includegraphics[width=2.8in, height=2.2in]{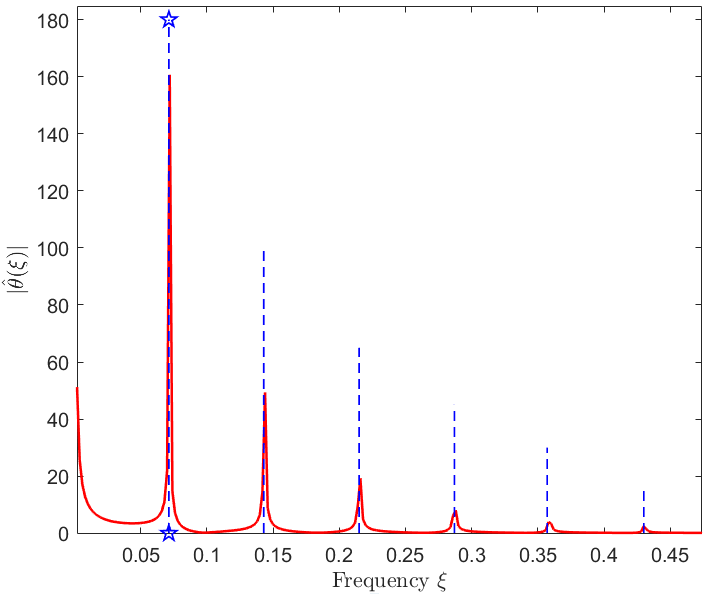} 
     \caption{Single-sided amplitude spectrum of a phase-locked  trajectory}
    \label{fig:free_1_N_20_4}
\end{figure}

\end{example}

\begin{example}[Two free oscillators]

In this example we still consider a system of $N=20$ oscillators with coupling strength $\gamma=1$, but instead choose the frequencies of two of the oscillators to guarantee that they do not phase-lock to the rest. More precisely the frequencies $\omega_1, \omega_2, ..., \omega_{18}$ are
chosen to be Gaussian random variables with mean 0 and variance $\frac{\gamma}{N}$, and the two free oscillators are chosen to have
frequencies $\omega_{19}=\gamma + 0.1$ and $\omega_{20} = 1.5\gamma + 0.01$.
As expected $K^* = 2$, and as before we work in the coordinate system
that rotates with the mean frequency of the cluster of $18$
oscillators. The results of a numerical simulation are depicted in
Figure \ref{fig:free_2_N_20_12}. As in the first experiment we see a stable cluster of
eighteen oscillators  with quasi-periodic disturbances as the two free
oscillators pass through the cluster. 

\begin{figure} [H]
 \includegraphics[width=2.3in, height=1.8in]{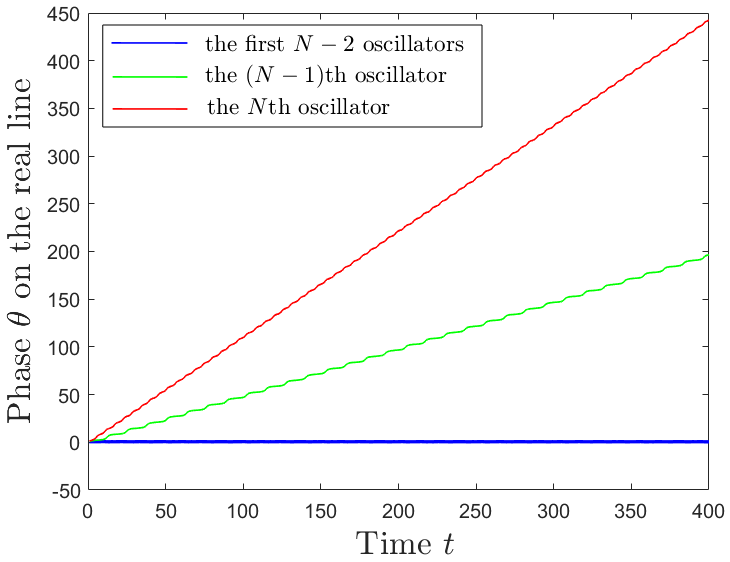}
  \includegraphics[width=2.3in, height=1.8in]{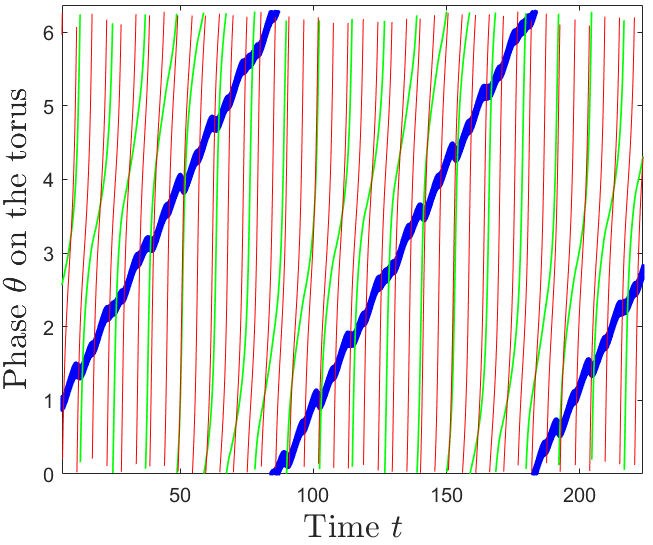}
\caption{A cluster of 18 phase-locked oscillators and 2 free oscillators.}
\label{fig:free_2_N_20_12}
\end{figure}

The left graph in Figure \ref{fig:free_2_N_20_12} shows the phases of the  oscillators in the cluster, which appear to be quasi-periodic. The effect of these two oscillators on the phase-locked ones can be seen from the right graph in Figure \ref{fig:free_2_N_20_12}. Enlarging a portion of this graph and redrawing the trajectories in the co-rotating frame gives Figure \ref{fig:free_2_N_20_3}.

\begin{figure} [H]
    \centering
    \includegraphics[width=2.8in, height=2.2in]{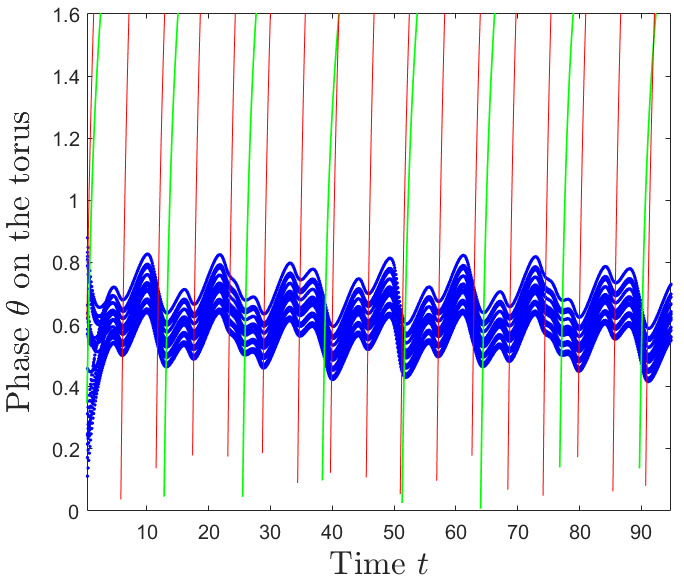}
    \caption{Phase trajectories on the torus under a rotated frame}
\label{fig:free_2_N_20_3}
\end{figure}

In a similar manner to the first experiment we expect a relation between the fundamental frequencies of the phase trajectory of a phase-locked oscillator $\xi$ and the angular frequencies of the free oscillator $\tilde{\omega}$: $\tilde{\omega} = 2\pi\xi$, Once again we compute 
the Fourier transform of one of the trajectories in the phase-locked cluster  and obtain Figure \ref{fig:free_2_N_20_4}.
\begin{figure} [H]
    \centering
    \includegraphics[width=2.8in, height=2.2in]{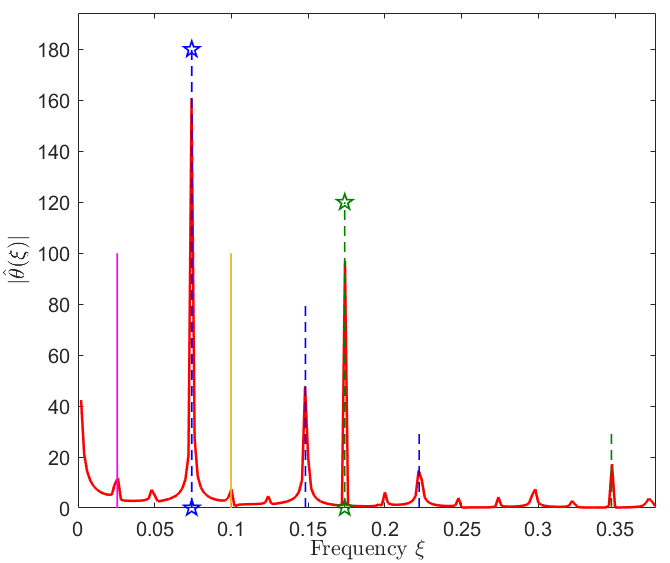} 
     \caption{Single-sided amplitude spectrum of a locked phase trajectory}
    \label{fig:free_2_N_20_4}
\end{figure}

As in the previous experiment we also computed the effective frequencies by $\omega_{\text{eff};19}= \frac{\theta_{19}(T)-\theta_{19}(T/2)}{T/2}$, and found $\omega_{\text{eff};19}\approx 0.4656$ and $\omega_{\text{eff};20}\approx 1.0920$. This agrees well with what we found by computing the Fourier transform of one of the trajectories of an oscillator in the phase-locked cluster. 
The fundamental frequencies, as seen in Figure \ref{fig:free_2_N_20_4}, are $\xi_{19}= 0.0741\approx \omega_{\text{eff};19}/2\pi$ and $\xi_{20} = 0.1738\approx \omega_{\text{eff};20}/2\pi$, associated with the two highest peaks denoted by the dashed lines with the star markers, in agreement with the direct numerical measurement. Since the two free oscillators have incommensurate frequencies we would expect to see many smaller peaks associated with various linear combinations of the fundamental frequencies. We have marked the integer multiples of $\xi_{19}$ with blue lines, and multiples of $\xi_{20}$ with green, as well as a couple of other peaks corresponding to other linear combinations. 
 For instance, the pink line denotes  a frequency of $ \xi_{20} - 2 \xi_{19}$ and the yellow line a frequency $\xi_{20} - \xi_{19}$.  We will not label all of the frequency peaks but all of them correspond to small integer combinations $j \xi_{19} + k \xi_{20}$ with $|j|,|k| \leq 3$. 
 
 \end{example}

\begin{remark}
In the previous two experiments, with one and two free oscillators,
the solutions appeared to be periodic and quasi-periodic
respectively. It is worth noting that it would probably be quite
difficult to prove the existence of a periodic or quasi-periodic
solution. Even if one were able to do so a linear stability analysis
of the solution would likely be highly non-trivial. In the case of a
periodic solution the stability analysis would involve a Floquet problem;
these types of problems are difficult to solve in any but the simplest
of cases. The spectrum of quasi-periodic operators is even more
difficult to understand: in the case of a quasi-periodic Schr\"odinger
operator the spectrum typically lies on a Cantor set\cite{JS1994}, rather than 
simple bands and gaps as in the periodic case. However by showing the
existence of a small exponentially attracting ball we can answer the
same physical question in a much easier way.   
\end{remark}

\begin{example}[Cauchy distributed oscillators]
The first two numerical experiments were instructive but obviously somewhat
contrived in that we picked one or two of the oscillators frequencies
by hand to ensure that we had some free oscillators. 

In this experiment we take $N=500$
oscillators with coupling strength $\gamma=5$. The frequencies
$\omega_1, \omega_2, ..., \omega_{500}$ were chosen to be standard Cauchy
random variables with  constant scale $0.01$, i.e.,  $\omega_i \sim
0.01\cdot Cauchy(0,1)$.
Of course Cauchy random variables have very broad tails, so we expect
large outliers to be relatively common (as compared with, say, a
Gaussian distribution). In the experiment depicted here,
$\omega_{\max}-\omega_{\min} = 7.2161 > \gamma=5$, so the necessary
condition for full phase-locking is not satisfied.  However, partial
phase-locking is guaranteed
if there exists some integer $K$ such that $g(K)<h(K)$. 

\begin{figure}[H]
    \centering
    \includegraphics[width=2.2in, height=1.8in]{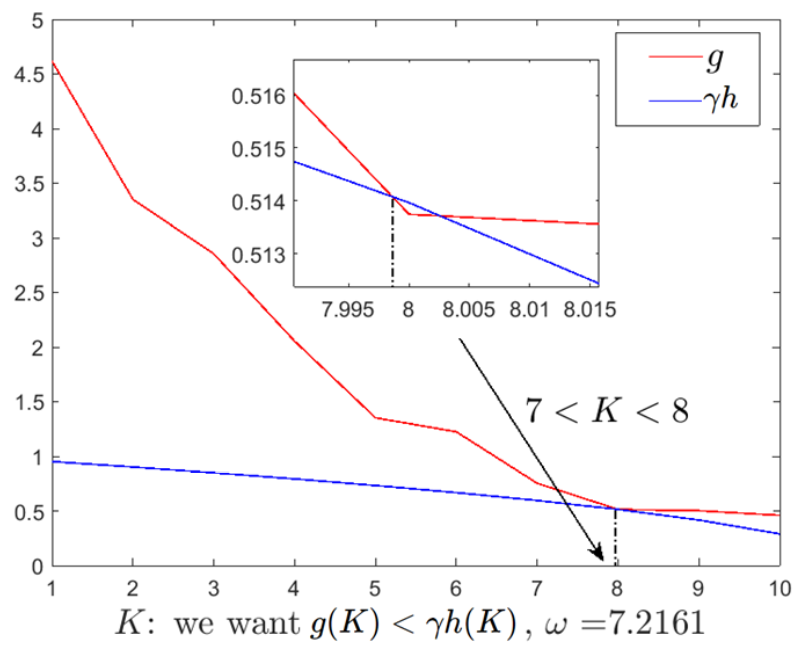}
    \includegraphics[width=2.4in,height=1.8in]{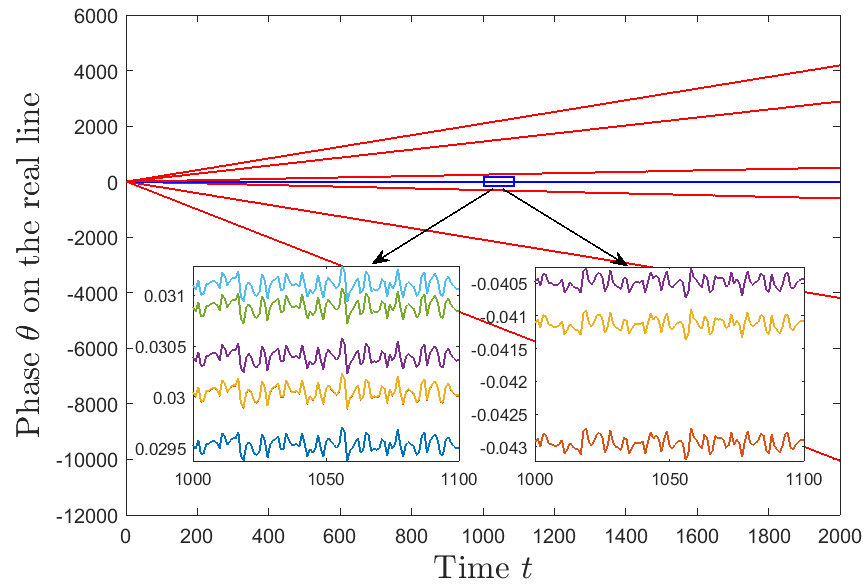}
\caption{Partially phase-locked oscillators with Cauchy distributed frequencies.}
  \label{fig:Exp5}
\end{figure}

The left graph in Figure \ref{fig:Exp5} shows the graphs of
functions $g(K,N)$ (computed directly from the random frequency vector
$(\omega_1,\omega_2,\ldots,\omega_{500})$) and $\gamma h(K,N)$ with respect to $K$. There is a very small region in which the inequality 
$g(K)<\gamma h(K)$ holds, from about $7.998$ to about $8.003$. This guarantees
the existence of  a phase-locked cluster of at least $N-8=492$ oscillators.
The theorem does not really say much about the basin of attraction,
except to guarantee that it has radius at least $O(1).$ 
The right graph in Figure \ref{fig:Exp5} shows the evolution
of the oscillator phases $\theta_i$ with respect to time $t$. In practice we see that the size of the 
phase-locked cluster is somewhat larger than the minimum guaranteed by the theorem: there are actually 494 phase-locked oscillators and 6 free oscillators. 
The red curves represent the trajectories of 494 phase-locked oscillators while the blue curves represent 6 free oscillators. 
\end{example}

\section{Almost sure Entrainment}

Our goal in this section is to understand the probability of partial
entrainment in the Kuramoto model with randomly distributed
frequencies, particularly in the large $N$ limit. The results in the previous section used a relatively strong
definition of partial phase-locking, in that we required a subset of
oscillators to remain close to an equilibrium configuration. This
resulted in fairly strong control on $\Vert \theta- \theta^*\Vert_S$;
however while it allowed a large number of non-phase-locked
oscillators  the percentage as a fraction of the total number had to
remain small. In considering the limit $N \rightarrow \infty$
one would really like to allow the possibility that a fixed
percentage of the oscillators, possibly small but independent of $N$,
would fail to phase-lock.  To this end we utilize a very pretty result of De Smet
and Aeyels \cite{DeSmet2007} that guarantees that a subset of
oscillators remains close to one another, while not necessarily being
close to any fixed configuration: partial entrainment.  

\begin{thm}[Aeyels-DeSmet]
For the finite $N$ Kuramoto model (\ref{eqn:Kuramoto}), if \begin{equation}
\min\limits_{S\subset \Omega, |S|=N-K} \max\limits_{i,j\in S}
|\omega_i-\omega_j| <
\gamma\sqrt{\frac{N}{N-K}}\left(\frac{2N-4K}{3N}\right)^{\frac{3}{2}},
\label{ineqn:Aeyels}
\end{equation}
then there exists a subset $S\subset \{1,...,N\}$ with $|S|=N-K$ such
that
there is an invariant region: 
$$\exists C_S>0 \ s.t. \ |\theta_i(t)-\theta_j(t)| < C_S, \ \forall t\ge 0, \ \forall i,j\in S,$$
where $C_s=2\arcsin{\sqrt{\frac{N-2K}{6(N-K)}}}$, i.e., the Equation (\ref{eqn:Kuramoto}) achieves partial entrainment for at least $N-K$ oscillators.
\label{thm:Aeyels}
\end{thm}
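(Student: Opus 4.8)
The plan is to build the positively invariant region directly, by exhibiting a subset $S$ of $N-K$ oscillators whose members can never spread apart by more than the prescribed angle $C_S$. First I would take $S$ to be the minimizer in (\ref{ineqn:Aeyels}): among all subsets of cardinality $N-K$, the one with the smallest frequency spread $\gamma_0 := \max_{i,j\in S}|\omega_i-\omega_j|$, so that the left-hand side of (\ref{ineqn:Aeyels}) equals $\gamma_0$. For that fixed $S$ I would study the candidate region $\{\theta : \max_{i,j\in S}|\theta_i-\theta_j|\le C_S\}$ and show it is forward invariant by checking that the flow never points outward on its boundary. Since this set is a sublevel set of the (nonsmooth) diameter $D(\theta):=\max_{i,j\in S}|\theta_i-\theta_j|$, the natural device is to track the extremal pair: at a boundary configuration let $M,m\in S$ be indices achieving $D=C_S$ and prove $\frac{d}{dt}(\theta_M-\theta_m)\le 0$. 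The non-differentiability of the max, where the extremal pair switches, is handled by a standard Dini-derivative / Danskin argument.

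The core computation is to estimate this derivative on the boundary. Writing out the Kuramoto field gives
\[
\frac{d}{dt}(\theta_M-\theta_m)=(\omega_M-\omega_m)-\frac{\gamma}{N}\sum_{j=1}^{N}\big[\sin(\theta_M-\theta_j)-\sin(\theta_m-\theta_j)\big],
\]
and a product-to-sum identity rewrites each summand as $2\sin\!\big(\tfrac{\theta_M-\theta_m}{2}\big)\cos\!\big(\tfrac{\theta_M+\theta_m}{2}-\theta_j\big)$. I would split the sum into the $N-K$ indices of $S$, which supply a \emph{restoring} force because they are confined to the arc of width $C_S$ between $\theta_m$ and $\theta_M$, and the remaining $K$ \emph{free} oscillators, whose phases are unconstrained and must be controlled in the worst case. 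Bounding the free oscillators by their antipodal placement (each $\cos(\cdots)\ge -1$, contributing at least $-K$) and the cluster below by a convenient estimate valid throughout the arc, the coupling term exceeds $\tfrac{2\gamma}{N}\sin(\tfrac{C_S}{2})\big[(N-K)\cos(C_S)-K\big]$. Forward invariance then reduces to $\gamma_0 < \tfrac{2\gamma}{N}\sin(\tfrac{C_S}{2})\big[(N-K)\cos(C_S)-K\big]$, in which the angle $C_S$ is a free parameter I get to optimize. Setting $s=\sin(\tfrac{C_S}{2})$ turns the right-hand side into $\tfrac{2\gamma}{N}\big[(N-2K)s-2(N-K)s^3\big]$, whose maximum over $s\in(0,1)$ occurs at $s^2=\tfrac{N-2K}{6(N-K)}$; this is exactly $C_S=2\arcsin\sqrt{\tfrac{N-2K}{6(N-K)}}$, and substituting back reproduces the threshold $\gamma\sqrt{\tfrac{N}{N-K}}\big(\tfrac{2N-4K}{3N}\big)^{3/2}$. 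Once invariance is established, the entrainment conclusion $|\theta_i-\theta_j|<C_S$ for all $i,j\in S$ and all $t\ge 0$ is immediate for any initial condition inside the region.

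The main obstacle is controlling the $K$ free oscillators. Unlike the Dörfler--Bullo setting of Theorem \ref{thm:DB}, where every oscillator lies in the cluster and the coupling is purely restoring, here the free oscillators can sit anywhere on $\mathbb{T}^1$ and actively push the extreme members of $S$ apart; the whole content of the theorem is that the cluster's internal coupling still dominates this adversarial pull. This is exactly what produces the $-K$ term and hence the restriction $N-2K>0$ (that is, $K<N/2$), below which both the bracket $(N-K)\cos(C_S)-K$ and the threshold remain positive. A secondary point is that the estimate must hold uniformly over all boundary configurations and all placements of the free oscillators simultaneously; this is legitimate because the coupling term splits additively, so the separate worst cases for the cluster and for the free set may be combined. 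I expect the restoring-force estimate for the cluster — obtaining a lower bound that is uniform over the arc yet clean enough to make the optimization yield precisely the stated constant — to be the delicate step, after which the frequency-spread bookkeeping and the single-variable optimization are routine.
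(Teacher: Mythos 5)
Your proposal is correct as far as it goes, but note that the paper itself offers no proof of this statement: Theorem \ref{thm:Aeyels} is imported verbatim from De Smet and Aeyels \cite{DeSmet2007}, so there is no internal argument to compare against. Judged on its own merits, your invariance argument is sound and, reassuringly, reproduces the published constants exactly. The one step worth making explicit is the cluster estimate: for $j\in S$ confined to the arc, the sharp bound is $\cos\bigl(\tfrac{\theta_M+\theta_m}{2}-\theta_j\bigr)\ge\cos(C_S/2)$, and you replace this by the weaker $\cos(C_S)$. That is a legitimate lower bound (since $\cos(C_S)\le\cos(C_S/2)$ on $[0,\pi]$, and here $C_S<\pi/2$), and it is precisely this deliberate weakening that turns the objective into the cubic $(N-2K)s-2(N-K)s^3$ in $s=\sin(C_S/2)$, whose maximizer $s^2=\tfrac{N-2K}{6(N-K)}$ gives the stated $C_S$ and whose maximum value $\tfrac{2\gamma}{N}\cdot\tfrac{2(N-2K)}{3}\sqrt{\tfrac{N-2K}{6(N-K)}}$ is algebraically identical to $\gamma\sqrt{\tfrac{N}{N-K}}\bigl(\tfrac{2N-4K}{3N}\bigr)^{3/2}$. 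Two small points to tidy up: (i) the strictness of (\ref{ineqn:Aeyels}) gives a strictly negative Dini derivative of the diameter on the boundary, which is what you need for forward invariance of the closed region; and (ii) you should record that $C_S<\pi/2<2\pi/3$, so that a configuration of pairwise geodesic diameter $C_S$ genuinely lies in a single arc of that length and the lift to $\mathbb{R}$ used in the derivative computation is valid. With those remarks your argument is a complete and correct derivation of the cited theorem, and the implicit constraint $K<N/2$ you identify is exactly the condition under which the bracket $(N-K)\cos(C_S)-K$ and the threshold stay positive.
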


\begin{remark}
The above result is very strong, in the
sense that it can in principle establish entrainment when a positive fraction  (up to roughly  $1/2$) of the
oscillators are free. This is what one would expect from experiments,
applications, and the original physical arguments of Kuramoto. On the
other hand it does not give very much information about the dynamics.
While the angles of the entrained subset of oscillators are guaranteed
to remain close to one another there can in principle be $O(1)$
changes in the relative positions of the oscillators, and thus 
the order parameter is not guaranteed to be constant. One expects
that, on average, the free oscillators will not contribute to the
order parameter (though there is not proof of that) but even defining
a ``reduced'' order parameter based only on the entrained oscillators
the most that one can say is that the order parameter is bounded from
above and below. We will discuss this further in the conclusions section.
\end{remark}

If we denote the right-hand side of inequality (\ref{ineqn:Aeyels}) as $\tilde{h}(K, N)$ and let $\rho = \frac{K}{N}$ represent the density of unlocked oscillators, then it is clear that in this new variable $\rho$ that
\begin{align}
    & g(\rho)=\min\limits_{S\subset \Omega, \frac{|S|}{N}=1-\rho}
    \max\limits_{i,j\in S} |\omega_i-\omega_j|,\\
    & \tilde{h}(\rho) =
    \sqrt{\frac{1}{1-\rho}}
    \left(\frac{2-4\rho}{3}\right)^{\frac{3}{2}}.
\end{align}
In terms of $\rho$, the
inequality (\ref{ineqn:Aeyels}) becomes $g(\rho) < \gamma \tilde{h}(\rho)$. Note
that the function $\tilde{h}(\rho)$ is only well-defined  when $\rho \le \rho_{max}=\frac{1}{2}$. Now we are ready to state our second main result as follows.

\begin{thm}
  \label{thm:Main2}
Consider the Kuramoto model (\ref{eqn:Kuramoto}) where  the natural frequencies $\{\omega_i\}_{i=1}^N$ are chosen
independently and identically distributed from a distribution with 
the following properties
\begin{itemize}
\item The distribution has a density $f(\varpi)$ that is symmetric and
  unimodal with support on the whole line --
  the density is increasing on ${\mathbb R}^-$ and decreasing on
  ${\mathbb R}^+.$
  \item The maximum of the density occurs at $\varpi=0$.
 \end{itemize}

Define the function $g_{\infty}(\rho)$ implicitly by 
\begin{equation}
\int_{-\frac{g_{\infty}}{2}}^{\frac{g_{\infty}}{2}} f(\varpi) d\varpi = 1 - \rho,
\label{eqn:ginfty}
\end{equation}
and the function $\tilde h(\rho)$ by
\[
  \tilde{h}(\rho) = \sqrt{\frac{1}{1-\rho}}
  \left(\frac{2-4\rho}{3}\right)^{\frac{3}{2}}.
  \]
Let $\gamma^*$ be the smallest value of $\gamma$ such that there
exists a solution to 
\begin{equation}
   g_{\infty}(\rho) = \gamma \tilde{h}(\rho) \qquad \rho \in (0,\frac{1}{2}].
   \label{eqn:gh}
\end{equation}
Then $\gamma^*$ is a threshold coupling strength for partial
entrainment in the following sense: let ${\mathbb P}_{N,\gamma}$ denote the probability that the Kuramoto
model admits a partially entrained state with $O(N)$ oscillators. 
Then 
\[
\lim_{N \rightarrow \infty} {\mathbb P}_{N,\gamma} = 1 \qquad \forall \gamma > \gamma^*.
\]

Moreover we have bounds on the size of the largest partially entrained
cluster: if
 $N_{\text{cluster}}$
denotes the number of the oscillators belonging
to the largest partially entrained cluster then
\[
 1-\rho_{\text{min}} \leq \frac{N_{\text{cluster}}}{N} \leq
  \int_{-\gamma}^\gamma f(\varpi) d\varpi. 
  \]
  Here, $\rho_{\text{min}}$ is defined as the smallest $\rho$-coordinate of the
intersection points of $g_{\infty}(\rho) $ and $\gamma\tilde{h}(\rho)$.
The inequality holds in the sense that
\begin{align}
    & \lim_{N \rightarrow \infty}{\mathbb P}(N_{\text{cluster}}  \ge (1-\rho_{\text{min}}) N -O(N^{\frac 12 + \epsilon})) = 1, 
    \label{eqn:ch2_partial_entrain_thm_1} \\
    & \lim_{N \rightarrow \infty}{\mathbb P}(N_{\text{cluster}}  \le N \int_{-\gamma}^\gamma f(\varpi) d\varpi +O(N^{\frac 12 + \epsilon})) = 1.
    \label{eqn:ch2_partial_entrain_thm_2}
\end{align}
\end{thm}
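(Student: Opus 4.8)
The plan is to convert the \emph{random} sufficient condition (\ref{ineqn:Aeyels}) of Theorem~\ref{thm:Aeyels} into a \emph{deterministic} one by identifying the almost-sure limit of the random quantity $g(K,N)$ defined in (\ref{def:g}). The first step is a combinatorial reduction: for a fixed block size $N-K$, the minimal spread $\min_{|S|=N-K}\max_{i,j\in S}|\omega_i-\omega_j|$ is attained by a set of \emph{consecutive} order statistics, so if $\omega_{(1)}\le\cdots\le\omega_{(N)}$ are the sorted frequencies then $g(K,N)=\min_{1\le i\le K+1}\bigl(\omega_{(i+N-K-1)}-\omega_{(i)}\bigr)$. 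By the symmetry and unimodality of $f$, the tightest window carrying a prescribed mass is centered at the mode $\varpi=0$, so I expect the minimizing block to be the central one and $g(K,N)\to 2F^{-1}(1-\rho/2)$, where $\rho=K/N$ and $F$ is the CDF. A direct computation from the defining relation (\ref{eqn:ginfty}) gives $2F^{-1}(1-\rho/2)=g_{\infty}(\rho)$, so the deterministic limit in the theorem is precisely the central spread of the law of $\omega$. I would prove this convergence, \emph{with its rate}, using Glivenko--Cantelli together with the Dvoretzky--Kiefer--Wolfowitz inequality, which bounds $\sup_x|F_N(x)-F(x)|$ by $O(N^{-1/2}(\log N)^{1/2})$ with probability tending to one; provided $f$ is bounded away from zero near the $(\rho/2)$- and $(1-\rho/2)$-quantiles, this controls the relevant order statistics and hence pins down $g(K,N)$ to within $O(N^{-1/2+\epsilon})$ in the $\rho$-variable.

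Granting this convergence, the existence statement and the lower bound (\ref{eqn:ch2_partial_entrain_thm_1}) follow from the same mechanism, optimized over the admissible density $\rho$. Since $\gamma>\gamma^*$, the curves $g_{\infty}(\rho)$ and $\gamma\tilde h(\rho)$ cross, so the set where $g_{\infty}(\rho)<\gamma\tilde h(\rho)$ is a nonempty open subset of $(0,\tfrac12]$. The largest guaranteed cluster corresponds to the smallest admissible $\rho$, namely just above the first crossing $\rho_{\text{min}}$, where $g_{\infty}(\rho_{\text{min}})=\gamma\tilde h(\rho_{\text{min}})$ as in (\ref{eqn:gh}). Choosing $\rho=\rho_{\text{min}}+\delta_N$ with $\delta_N=O(N^{-1/2+\epsilon})$ selected to dominate the fluctuations of $g(K,N)$ from the previous step, and setting $K=\lceil\rho N\rceil$, the hypothesis (\ref{ineqn:Aeyels}) holds with probability tending to one; Theorem~\ref{thm:Aeyels} then produces an entrained set of size $N-K\ge(1-\rho_{\text{min}})N-O(N^{1/2+\epsilon})$, which is $O(N)$ and thus also yields $\lim_{N\to\infty}{\mathbb P}_{N,\gamma}=1$.

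The upper bound (\ref{eqn:ch2_partial_entrain_thm_2}) is a necessary-condition argument of a different character. If $S$ is entrained then $\limsup_{t}|\theta_i-\theta_j|\le C_S$ forces all oscillators in $S$ to share a common long-time average rotation rate $\Omega$, since $\tfrac1t\bigl((\theta_i(t)-\theta_i(0))-(\theta_j(t)-\theta_j(0))\bigr)\to 0$. Writing $\tfrac1t(\theta_i(t)-\theta_i(0))=\omega_i-\tfrac1t\int_0^t\tfrac{\gamma}{N}\sum_j\sin(\theta_i-\theta_j)\,ds$ and using the uniform bound $\bigl|\tfrac{\gamma}{N}\sum_j\sin(\theta_i-\theta_j)\bigr|\le\gamma$ gives $|\omega_i-\Omega|\le\gamma$ for every $i\in S$. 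Hence the entrained natural frequencies all lie in a single window of width $2\gamma$, and by symmetry and unimodality the densest such window is centered at $0$; the empirical count inside it is at most $N\int_{-\gamma}^{\gamma}f(\varpi)\,d\varpi+O(N^{1/2+\epsilon})$, again by DKW, which is the claimed bound.

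The step I expect to be the main obstacle is the uniform control of $g(K,N)$ near the crossing with the sharp $N^{1/2+\epsilon}$ rate. Two points require care: first, one must verify that the minimizing block is genuinely the central one rather than a slightly off-center competitor, which uses strict unimodality and a quantitative lower bound on $f$ at the relevant quantiles; second, one must convert fluctuations in the quantile \emph{value} into fluctuations in the index $\rho$ without degrading the rate. A secondary subtlety lies in the upper bound, where the "common average rotation rate" $\Omega$ is a priori only defined through $\limsup$/$\liminf$ of the averaged phases rather than a genuine limit, so the inequality $|\omega_i-\Omega|\le\gamma$ must be phrased carefully; fortunately, the window-count bound does not depend on the precise value of $\Omega$, only on its existence as a common rate, so this does not affect the final estimate.
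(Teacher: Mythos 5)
Your proposal is correct and its overall architecture coincides with the paper's: (i) identify the deterministic limit $g_\infty(\rho)=2F^{-1}(1-\rho/2)$ of the random spread $g(K,N)$ with an $O(N^{-1/2+\epsilon})$ concentration rate, (ii) feed this into the Aeyels--De Smet criterion at $\rho$ slightly above the first crossing $\rho_{\min}$ to get existence and the lower bound on the cluster size, and (iii) obtain the upper bound from the $\ell_1/\ell_\infty$ necessary condition, which confines entrained frequencies to a window of width $2\gamma$ whose maximal mass is the centered one by symmetry and unimodality. The one place you genuinely diverge is the proof of the concentration step (the paper's Proposition~\ref{prop:glimit}): you reduce $g(K,N)$ to consecutive order statistics and invoke the Dvoretzky--Kiefer--Wolfowitz inequality, whereas the paper argues directly on intervals anchored at sample points, applies a union bound over the $N$ anchored intervals, and controls the resulting binomial tail by a Stirling expansion. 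Your route is cleaner and automatically uniform over all candidate windows (DKW is a uniform bound on $F_N-F$), at the cost of importing a named inequality; the paper's argument is self-contained and elementary but longer. Both need the same two supporting facts you flag: that the centered interval maximizes the mass among intervals of fixed length (from unimodality), and that $f$ is bounded away from zero at the quantiles $\pm F^{-1}(1-\rho/2)$ so that quantile fluctuations translate into index fluctuations without losing the rate; the latter follows from the hypothesis that $f$ is unimodal with full support, since $f$ vanishing at a finite point would force it to vanish on a half-line. Your remark that the common rotation rate $\Omega$ of an entrained cluster need only exist as a common Ces\`aro rate, not a genuine limit, is a point the paper glosses over, and your phrasing of the $|\omega_i-\Omega|\le\gamma$ window is a slightly more careful version of the paper's $\omega_{\max}-\omega_{\min}<2\gamma$ condition.
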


\begin{remark}
This is, of course, a sufficient condition ($\gamma>\gamma^*$) for partial phase-locking and not a necessary one. Of course based on what is known about the continuous Kuramoto model and the physical arguments on the finite $N$
Kuramoto model one expects (and the numerics to be presented later
support this) that partial entrainment occurs for much smaller values of
$\gamma$ than are required by the theorem. 

As far as the hypotheses go, the second condition that the maximum
of the density of the distribution occurs at $\omega=0$  can be assumed w.l.o.g. by working in a co-rotating
frame. In the first  condition the assumption of symmetry is not
really required, and was adopted mostly for ease of exposition, but
the assumption that the density is monomodal enters into the proof in
a more substantial way. This will be discussed later.  

By the definition of $g_{\infty}$, it is clear that $g_{\infty} = 2F^{-1}(1-\frac{\rho}{2})$. Under the assumptions of symmetry and unimodality it is easy to compute that $g_{\infty}(\rho)$ is a decreasing functions with a positive second derivative. It is also easy to compute that $\tilde{h}(\rho)$
is a decreasing function with a positive second derivative. In fact, if one can show $(g_{\infty} - \tilde{h})(\rho)$ is a convex function when $\rho\le \frac{1}{2}$, then it follows that these functions can be equal, $g_{\infty}(\rho)=\tilde{h}(\rho)$, at at most two distinct values of $\rho$, implying that in the continuum limit the
range of possible entrained cluster sizes is an interval. Plus, as the coupling strength $\gamma$ increases, $\rho_{\text{min}}$ decreases until the first intersection point vanishes, which implies that partial synchronization becomes full synchronization. For instance, when $\omega_i$'s follow standard Gaussian distribution, the graph of the functions $g_{\infty}$ and $\gamma\tilde{h}$ is shown below:

\begin{figure}[H]
    \centering 
    \includegraphics[width=2.8in, height=2.2in]{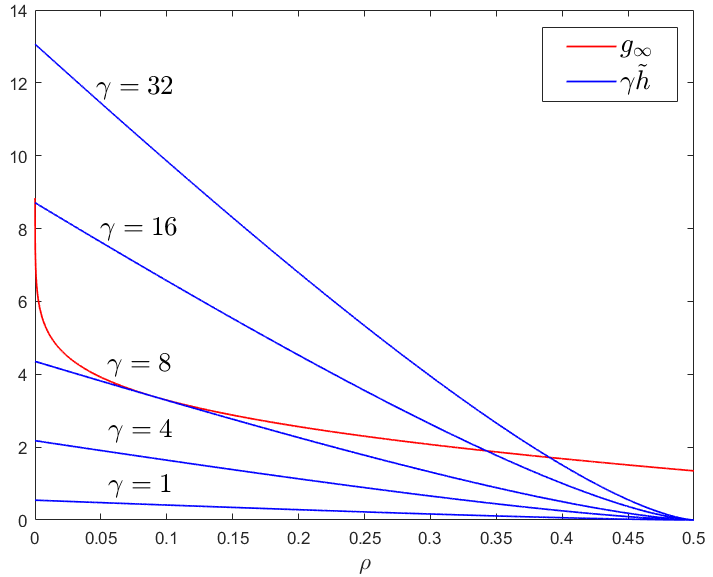}
    \caption{Intersections of $g_{\infty}$ and $\gamma\tilde{h}$ for Gaussian distribution.}
    \label{fig:gh}
\end{figure}
\end{remark}

To prove Theorem \ref{thm:Main2}, we first prove that under the assumptions on the distribution of the  $\omega_i$, in the limit $N \rightarrow \infty$ the function $g(\rho)$ tends to a deterministic function $g_{\infty}(\rho)$, which is Proposition \ref{prop:glimit} stated below. Then with Proposition \ref{prop:glimit} and Theorem \ref{thm:Aeyels}, it is straightforward to derive Theorem \ref{thm:Main2}.

\begin{prop}
Suppose that the natural frequencies $\{\omega_i\}_{i=1}^N$ are independent, identically distributed random variables satisfying  the assumptions in Theorem \ref{thm:Main2}, with $f(\varpi)$ the probability density function and $F(\varpi)$ the cumulative distribution function. Then with high probability $g(K,N)$ converges to a deterministic
function $g_{\infty}(\rho)$ defined by the Equation \ref{eqn:ginfty}. More precisely, we have the estimate 
\begin{equation}
\lim_{N \rightarrow \infty}{\mathbb P} (|g(K,N) - g_{\infty}(\frac{K}{N})| \leq N^{-\frac12+\epsilon}) = 1.
\label{eqn:glimit}
\end{equation}
\label{prop:glimit}
\end{prop}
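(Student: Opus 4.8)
The plan is to reduce the combinatorial minimization defining $g(K,N)$ to a statement about the order statistics of the sample, and then to control those order statistics by a uniform concentration inequality for the empirical distribution function. Write $\omega_{(1)}\le\omega_{(2)}\le\cdots\le\omega_{(N)}$ for the order statistics. First I would record the elementary fact that, for a subset of prescribed cardinality, the range (max minus min) is minimized by a block of consecutive order statistics: if $S$ with $|S|=N-K$ has smallest element $\omega_{(a)}$ and largest element $\omega_{(b)}$, then all $N-K$ elements lie among $\omega_{(a)},\ldots,\omega_{(b)}$, forcing $b-a\ge N-K-1$, so that
\[
g(K,N)=\min_{1\le a\le K+1}\left(\omega_{(a+N-K-1)}-\omega_{(a)}\right).
\]
Heuristically the minimizing block sits symmetrically about the mode: since $f$ is symmetric and unimodal, the optimal continuum block keeps the central mass $1-\rho$ and discards $\rho/2$ from each tail, giving range $2F^{-1}(1-\rho/2)=g_{\infty}(\rho)$ with $\rho=K/N$, matching the definition (\ref{eqn:ginfty}).

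The probabilistic engine is the Dvoretzky--Kiefer--Wolfowitz inequality: if $F_N$ denotes the empirical distribution function, then $\mathbb{P}(\sup_{x}|F_N(x)-F(x)|>\delta)\le 2e^{-2N\delta^2}$. Choosing $\delta=N^{-1/2+\epsilon/2}$ drives this probability to $0$, so with high probability $\sup_x|F_N(x)-F(x)|\le\delta$. On this event I would pass between order statistics and population quantiles: from $F_N(\omega_{(m)})=m/N$ one gets $|F(\omega_{(m)})-m/N|\le\delta$, and wherever the density is bounded below by $c>0$ this yields $|\omega_{(m)}-F^{-1}(m/N)|\le\delta/c$.

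For the lower bound I would use a mass-capture argument that is automatically uniform over all admissible $S$ and so needs no separate handling of the tails. For any $S$ with $|S|=N-K$, the interval $[\min S,\max S]$ contains all $N-K$ points of $S$, so $F_N(\max S)-F_N(\min S)\ge 1-\rho$, and DKW upgrades this to $F(\max S)-F(\min S)\ge 1-\rho-3\delta$ (the $O(1/N)$ discretization corrections from the empirical jumps being absorbed into the $3\delta$). I would then prove a short variational lemma: for a symmetric unimodal density, among all pairs $m\le M$ with $F(M)-F(m)=\beta$ the range $M-m$ is minimized at $m=-M$, giving minimum range $2F^{-1}((1+\beta)/2)=g_{\infty}(1-\beta)$. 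Applying this with $\beta=1-\rho-3\delta$ gives a range bound $g_{\infty}(\rho+3\delta)$, and since $g_{\infty}$ is Lipschitz near $\rho$ (its derivative is $-1/f(F^{-1}(1-\rho/2))$, finite for $\rho$ bounded away from $0$) we obtain $\max S-\min S\ge g_{\infty}(\rho)-C\delta$ for every such $S$, hence $g(K,N)\ge g_{\infty}(\rho)-CN^{-1/2+\epsilon/2}$. For the matching upper bound it suffices to exhibit one good block, namely $a=\lceil K/2\rceil$: its endpoints are the order statistics at quantiles $\rho/2$ and $1-\rho/2$, which are interior points where $f>0$, so Step~2 gives $\omega_{(a+N-K-1)}-\omega_{(a)}\le g_{\infty}(\rho)+CN^{-1/2+\epsilon/2}$ with high probability. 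Both rates sit below the stated $N^{-1/2+\epsilon}$ for $N$ large.

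The hard part is not probabilistic but structural: establishing the variational lemma and, with it, showing that the minimizing block is the central one rather than a tail block whose extreme order statistics fluctuate wildly. The mass-capture reformulation of the lower bound is precisely what resolves this, because it bounds the range of every subset through the captured mass alone and never refers to poorly-concentrated tail order statistics; this is exactly where unimodality is indispensable, since it is what forces the smallest-range interval capturing a fixed mass to be centered at the mode. The remaining care is bookkeeping: keeping the density-lower-bound constant $c$ uniform, which requires $\rho=K/N$ bounded away from $0$ and at most $\tfrac12$, so that $F^{-1}(\rho/2)$ and $F^{-1}(1-\rho/2)$ are interior quantiles with $f>0$.
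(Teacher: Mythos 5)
Your argument is correct, and while it shares the paper's one essential structural ingredient, the probabilistic machinery you use is genuinely different. Both proofs hinge on the same fact---for a symmetric unimodal density, among all intervals carrying a fixed amount of mass the shortest is the one centered at the mode (equivalently, for fixed length the centered interval captures the most mass)---and this is exactly where unimodality enters in both arguments. The paper's lower bound proceeds by a union bound over the $N$ intervals of length $g_\infty-2\delta$ anchored at sample points, reduces each to the centered interval via that observation, and then estimates the binomial tail $\sum_{M\ge (1-\rho)N}\binom{N}{M}p^M(1-p)^{N-M}$ by hand with Stirling's formula; its upper bound is a law-of-large-numbers count of the points in $[-a,a]$. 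You instead reduce the minimization to blocks of consecutive order statistics, invoke the Dvoretzky--Kiefer--Wolfowitz inequality once to get uniform control of the empirical CDF, and recast the lower bound as a ``mass-capture'' statement that applies simultaneously to every admissible subset with no union bound and no explicit binomial computation. Your route is shorter and cleaner: DKW packages both the uniformity over intervals and the large-deviation estimate into a single standard inequality of the same quality (your $e^{-2N^{\epsilon}}$ versus the paper's $e^{-cN^{2\epsilon}}$), and the order-statistics reduction makes the upper bound a one-line exhibition of the central block rather than an appeal to the strong law (which, as written in the paper, is actually slightly too coarse to give the stated $N^{-1/2+\epsilon}$ rate without the extra slack you build in). The one caveat you flag---that $\rho=K/N$ must be bounded away from $0$ so that the quantiles $F^{-1}(\rho/2)$ and $F^{-1}(1-\rho/2)$ are interior points where $f$ is bounded below---is not a defect relative to the paper: its key estimate $(1-\rho)-p\sim N^{-1/2+\epsilon}$ requires exactly the same lower bound on $f$ at $F^{-1}(1-\rho/2)$, so the restriction is implicit there as well.
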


\begin{proof}[Sketch of proof]
Define $a = F^{-1}(1-\frac{\rho}{2})$ so that we have $g_\infty = 2a$ and define $\delta=\frac{1}{2}N^{-\frac{1}{2}+\epsilon}$. First, using the law of large number theorem, one can easily show $g(\rho)\le 2(a+\delta)$ with probability one. What is less obvious to show is that $g(\rho)\ge 2(a-\delta)$ with probability one where $\delta = \frac{1}{2}N^{-\frac{1}{2}+\epsilon}$ and $\epsilon>0$. In other words, we need to prove 
\begin{equation}
 \mathbb{P}(A) \to 0\ \text{as}\ N\to\infty,  
\end{equation}
where $A$ is the event that ``there exists an interval with length $L=2(a-\delta)$ containing more than $(1-\rho)N$ points". Notice that if no intervals of Length $L$ with $\omega_k$ at an endpoint contain more than $m$ points then no any other interval does. So we can only focus on $N$ intervals $\{[\omega_i, \omega_i+L]: i=1,2,...,N\}$. Moreover, the interval centered at zero maximizes the probability that a point lies in the interval, i.e., $I=[-L/2, L/2]$ gives the largest $\mathbb{P}(x\in I)$ among all intervals of length $L$. Based on these observations, it is not hard to see 

\begin{equation}
    \mathbb{P}(A) \le N\sum\limits_{M=\ceil{(1-\rho)N}}^N \binom{N}{M} p^M(1-p)^{N-M},
    \label{ineqn:ch2.2}
\end{equation}
where $1-\rho = \int_{-a}^a f(x)dx$ and $p=\int_{-L/2}^{L/2} f(x)dx = \int_{-a+\delta}^{a-\delta} f(x)dx$. Using the Stirling approximation, one can prove that the right-hand side of the inequality (\ref{ineqn:ch2.2}) approaches zero as $N$ approaches infinity. So we are done. This is the main idea of our proof, the full proof can be found in Appendix. 
\end{proof}

Proposition \ref{prop:glimit} suggests Equation (\ref{eqn:ch2_partial_entrain_thm_1}), a probabilistic lower bound on the number of oscillators in a partially entrained cluster. On the other hand, the probabilistic upper bound, given by Equation (\ref{eqn:ch2_partial_entrain_thm_2}) in Theorem \ref{thm:Main2}, is implied by the central limit theorem. We formalize it in the following proposition.

\begin{prop}
  Consider the finite $N$ Kuramoto model (\ref{eqn:Kuramoto})
  where the frequencies $\omega_i$ are independent and identically
  distributed according to a distribution with a density $f(\omega)$ that
  is symmetric and monomodal, with the unique  maximum of $f$ occuring
  at $\omega=0$. Then the
  probability that there is any partially entrained cluster
  containing more than
  \[
    N \int_{-\gamma}^\gamma f(\varpi)d\varpi + O(N^{\frac{1}{2}+\epsilon})
  \]
  tends to zero as $N \rightarrow \infty$. 
\end{prop}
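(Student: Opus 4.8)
The plan is to prove this upper bound by combining a dynamical necessary condition for entrainment with a concentration-of-measure counting argument, the latter of which parallels the proof of Proposition~\ref{prop:glimit}. The key observation is that the coupling term in (\ref{eqn:Kuramoto}) can shift any oscillator's effective frequency by at most $\gamma$, so the natural frequencies of the members of an entrained cluster must all fall inside a single interval of length $2\gamma$.

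First I would establish this necessary condition. Suppose $S$ is a partially entrained cluster, so that $|\theta_i(t)-\theta_j(t)| < C_S$ for all $i,j \in S$ and all $t \ge 0$. Integrating (\ref{eqn:Kuramoto}) from $0$ to $T$ and dividing by $T$ gives, for each $i \in S$,
\begin{equation}
\frac{\theta_i(T)-\theta_i(0)}{T} = \omega_i - \frac{\gamma}{N}\cdot\frac{1}{T}\int_0^T \sum_j \sin(\theta_i(t)-\theta_j(t))\,dt.
\end{equation}
Since the inner sum has $N$ terms each bounded by $1$, the correction term lies in $[-\gamma,\gamma]$, so the time-averaged velocity $v_i(T) := (\theta_i(T)-\theta_i(0))/T$ satisfies $|v_i(T)-\omega_i| \le \gamma$. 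On the other hand, boundedness of the phase differences forces $|v_i(T)-v_j(T)| \le (C_S + |\theta_i(0)-\theta_j(0)|)/T \to 0$ for $i,j \in S$. Passing to a subsequence along which the bounded quantities $v_i(T)$ converge to a common limit $v$, I obtain $\omega_i \in [v-\gamma, v+\gamma]$ for every $i \in S$. Thus the natural frequencies of any entrained cluster lie in some interval of length $2\gamma$.

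It then remains to bound the maximum number of the $\omega_i$ that can fall in any interval of length $2\gamma$. By symmetry and unimodality of $f$ the expected fraction in such an interval is maximized by the centered one, giving $p^* := \int_{-\gamma}^\gamma f(\varpi)\,d\varpi$ as the relevant probability. Exactly as in the proof of Proposition~\ref{prop:glimit}, it suffices to control the $N$ intervals anchored at the data points, $\{[\omega_k,\omega_k+2\gamma]\}_{k=1}^N$, since the sliding count is piecewise constant and its maximum is attained at an interval with a data point at an endpoint. Conditioning on $\omega_k$, the number of remaining points in $[\omega_k,\omega_k+2\gamma]$ is Binomial with success probability at most $p^*$; a Hoeffding/Chernoff bound then gives $\mathbb{P}(\mathrm{count} > Np^* + N^{\frac12+\epsilon}) \le \exp(-cN^{2\epsilon})$, and a union bound over the $N$ anchors leaves a total failure probability of at most $N\exp(-cN^{2\epsilon}) \to 0$. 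Hence with probability tending to $1$ no length-$2\gamma$ interval, and in particular no entrained cluster, contains more than $Np^* + O(N^{\frac12+\epsilon})$ oscillators.

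The main obstacle I anticipate is the first step rather than the counting. The counting is essentially a rerun of the argument already used for Proposition~\ref{prop:glimit}, so the genuinely new content is the dynamical claim that entrainment confines the natural frequencies to an interval of length $2\gamma$. Care is needed there because the individual time-averaged velocities $v_i(T)$ need not converge; one only knows that they are uniformly bounded and that their pairwise differences vanish, so the argument must extract a common subsequential limit rather than assert the existence of rotation numbers outright. One should also confirm that the mean frequency $v$ of a macroscopic cluster is $O(1)$ (indeed near $0$ by symmetry), so that the confining interval genuinely overlaps the bulk of the distribution and $p^*$ is the operative probability.
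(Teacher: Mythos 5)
Your proposal is correct and follows essentially the same route as the paper's (sketched) proof: the $\ell_1/\ell_\infty$ estimate confines the natural frequencies of any entrained cluster to an interval of length $2\gamma$, and the unimodality-plus-concentration counting argument from Proposition~\ref{prop:glimit} bounds the number of $\omega_i$ in any such interval by $N\int_{-\gamma}^{\gamma} f(\varpi)\,d\varpi + O(N^{\frac12+\epsilon})$ with high probability. Your treatment of the first step via time-averaged velocities and a common subsequential limit is a careful elaboration of what the paper simply cites as "the usual $\ell_1/\ell_\infty$ estimate," and is sound.
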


\begin{proof}[Sketch of proof]
The proof of this is straightforward and similar to previous
arguments, so we just give the broad strokes. The basic observation is that from the usual
${\ell}_1/ {\ell}_\infty$ estimate we have that a subset of
oscillators cannot be partially entrained if
\[
  \omega_{max}-\omega_{min} \geq 2 \gamma.
\]
By the usual central limit theorem arguments the number of $\omega_i$
lying in an interval $I$ is, for $N$ large, approximately $\int_I
f(\omega)$. We would like to guarantee that (with high probability) there is no interval of length $I$ containing substantially more 
frequencies than that. Since $f$ is symmetric and monomodal the interval of
length $|I|=2\gamma $ which maximizes  $\int_I f(\omega)$ is the symmetric one,
so the largest cluster will, with high probability, have no more than
$\int_{-\gamma}^\gamma f(\omega) d\omega$.
\end{proof}

\begin{remark}
It is worth comparing this with the minimum cluster size guaranteed by Theorem \ref{thm:Main2}. The condition $g_\infty(\rho) \leq \gamma \tilde h(\rho)$
defines the largest guaranteed cluster size $1-\rho^*$ as a somewhat
complicated implicit function of the coupling strength $\gamma$, but
this simplifies greatly in the limit of large coupling strength
$\gamma$.
In the limit $\gamma \gg 1$ we have that $\rho \ll 1$ and the partial
synchronization condition becomes $g_\infty(\rho) \leq \gamma \tilde h(0) =
\gamma (2/3)^{\frac32}.$ Thus the theorem guarantees a partially
locked cluster of size at least
\[
 N_{\text{\rm cluster}} \gtrsim \int_{-(\frac23)^{\frac32}
    \frac{\gamma}{2}}^{(\frac23)^{\frac32} \frac{\gamma}{2}}f(\omega) d\omega
  \]
for large $\gamma$. 
\end{remark}

\section{Numerical Examples}
In this section, we give two examples to support Theorem \ref{thm:Main2}. In the first example, we consider oscillators with Gaussian distributed natural frequencies. In the second example, we consider oscillators with Cauchy distributed natural frequencies.

\begin{example}

For the case of Gaussian distributed natural frequencies $\omega_i$ the function  $g_{\infty}(\rho)$ is the inverse function to the error function:
\[
g_{\infty}(\rho) = 2\sqrt{2} \erf^{-1}(1-\rho).
\] 
Numerical calculations show that in the thermodynamic limit the minimum coupling in order to
guarantee the existence of partially entrained states is $\gamma^*
\approx 8.0027\sigma $, where $\sigma$ is the variance of the Gaussian
distribution (it is clear from scaling that the critical coupling
strength should be proportional to the variance).  For this critical
value of $\gamma$  we 
have $g_{\infty}(\rho) = \gamma^* \tilde{h}(\rho)$
at $\rho \approx .0901$.  Thus for a Gaussian distribution of
frequencies the theorem guarantees the existence of a
partially synchronized cluster containing all but about $9\%$ of the oscillators.

We illustrate Proposition \ref{prop:glimit}, with $N=10000$
oscillators with coupling strength $\gamma=10$ and suppose the natural
frequencies follow standard Gaussian distribution
$\mathcal{N}(0,1)$. In Figure \ref{fig:gfunc} we plot the function 
\[
g(\rho) = \min_{S\subset \Omega ~~|S|=(1-\rho)N} \max_{i,j \in S} |\omega_i-\omega_j|,
\]
 the function $g_{\infty}(\rho)$ and the curves $g_\infty(\rho)\pm
\frac{1}{\sqrt{N}}$. One can see that, as expected, the  actual
curve typically lies within $O(N^{-\frac12})$ of the limiting curve. 

We note at this point that it is difficult to see a sharp distinction
between partial phase-locking regime and the full phase-locking regime
for Gaussian distributed random variables in numerical
simulations. The reason for this is clear: partial phase-locking takes
place in the mean-field scaling
\[
\frac{d\theta_i}{dt} = \omega_i + \frac{\gamma}{N} \sum \sin(\theta_j-\theta_i)
\] 
while for Gaussian distributed frequencies full phase-locking takes
place in the slightly more strongly coupled scaling 
\[
\frac{d\theta_i}{dt} = \omega_i + \frac{\gamma\sqrt{\log{N}}}{N} \sum \sin(\theta_j-\theta_i).
\] 
In order to get a clean separation of scales one would like
$\sqrt{\log(N)}\gg 1$, which is numerically challenging. As an example
choosing $\sqrt{2\log{N}}\geq 8$ would guarantee that
(by the results of Bronski, DeVille and Park) that full phase locking
does not occur and (by the above) that partial phase-locking does
occur.  This would require an $N$ in the range 
$N\gtrsim 10^{14}$, which is not numerically feasible. The partial
phase-locking behaviour is much easier to observe for distributions
with broader tails. This motivates our next example, that of Cauchy
distributed frequencies. 

\begin{figure}[H]
    \centering 
    \includegraphics[width=.85\linewidth]{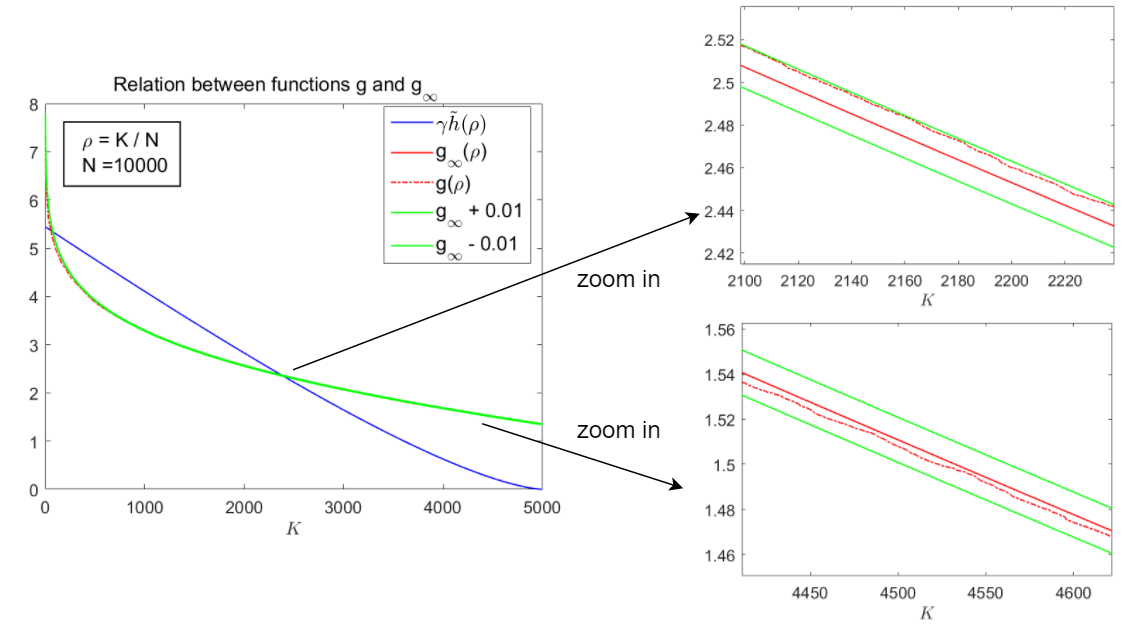}
    \caption{Comparison between functions $g$ and $g_{\infty}$ for
      Gaussian distribution}
\label{fig:gfunc}
\end{figure}

\end{example}

\begin{example}
For the case of Cauchy distributed natural frequencies $\omega_i$, their pdf and cdf are as follows:
\begin{align}
&f(\varpi;k,\lambda) = \frac{1}{k \pi (1+(\frac{\varpi-\lambda}{k})^2)},\\
&F(\varpi;k,\lambda) = \frac{1}{\pi} \arctan(\frac{\varpi-\lambda}{k}) + \frac{1}{2},
\end{align}
where $k$ is the scale parameter, and $\lambda$ is the location parameter, specifying the location of the peak of the distribution. We consider the case when $\lambda=0$.
The function  $g_{\infty}(\rho)$ is the inverse function to the cumulative distribution (sometimes called the quantile function):
\[
g_{\infty}(\rho) = 2\tan(\frac{\pi}{2} (1-\rho)).
\] 
Numerical calculations show that in the thermodynamic limit the minimum coupling in order to
guarantee the existence of partially entrained states is $\gamma^*
\approx 21.4950k $, where $k$ is the Cauchy scale parameter. It is clear from scaling that the critical coupling
strength should be proportional to the scale parameter $k$.  For this critical
value of $\gamma$  we 
have $g_{\infty}(\rho) = \gamma^* \tilde{h}(\rho)$
at $\rho \approx .2258$.  Thus for a Cauchy distribution of
frequencies the theorem guarantees the existence of a
partially synchronized cluster containing all but about $22.58\%$ of the oscillators for . $\gamma^*
\approx 21.4950k .$

As a numerical illustration of  Proposition \ref{prop:glimit}, we consider $N=10000$ oscillators with coupling strength $\gamma=50$ and suppose the natural frequencies follow Cauchy distribution with $k=1, \lambda=0$. We have that  $|g(\rho)- g_{\infty}(\rho)| = o( N^{-\frac12+\epsilon})$. The graphs of $g$ and $g_{\infty}$ are shown in Figure \ref{fig:ginfinity}, along with the curves $g_\infty \pm \frac{1}{\sqrt{N}} = g_\infty \pm 0.01$. As is clear from the figure we see the typical central limit type convergence of $g(\rho)$ to $g_\infty(\rho)$. 
 \begin{figure}[H]
    \centering 
    \includegraphics[width=.95\linewidth]{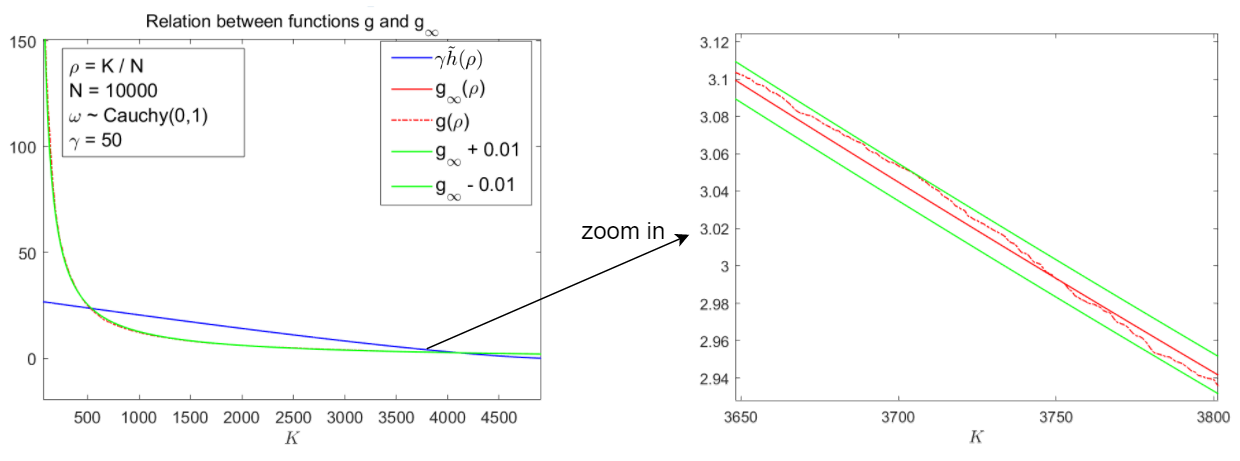}
    \caption{Comparison between functions $g$ and $g_{\infty}$ for Cauchy distribution}
    \label{fig:ginfinity}
\end{figure}

Next, we present a simulation to illustrate Theorem \ref{thm:Main2}.  In this simulation we take the Cauchy scale parameter to be  $k=1$ and the location parameter $\lambda=0$. We take $N=500$ oscillators, $\omega_1 = 0$ and $\omega_i \sim f(x; 1, 0)$ for $ i=2,...,N$, then direct calculation gives $\gamma^* = 21.4950$. Our numerical criteria for determining if an oscillator is part of the entrained cluster is as follows. We assume that oscillator number 1, which has zero frequency, is part of any entrained cluster.  Define $\Phi_{1i} = \theta_i(\frac{T}{2}) - \theta_1(\frac{T}{2})$, $\Phi_{2i} = \theta_i(T) - \theta_1(T)$ and $\Psi_i = (\Phi_{2i} - \Phi_{1i}) \times \frac{2}{T}$. Then we have
$
    \Psi_i \to \omega_{i\infty} - \omega_{1\infty} \ \text{as}\ T \to \infty,
$
and thus, $\Psi_i$ approaches zero if $\theta_i$ is locked with $\theta_1$. Now, define a relative frequency difference: 
$d = 10^{-5}\times (\max_i(\Psi_i) - \min_i(\Psi_i))$, where $10^{-5}$
is a tolerance that  we choose to classify phase-locked oscillators. If $\Psi_i \le d$, we regard $\theta_i$ as the oscillator that locks with $\theta_1$. To see the effect of $\gamma$ on the partial entrainment, we vary $\gamma$ from 1 to 25, and for each $\gamma$, use 5 samples of $\omega_i$ to solve Equation (\ref{eqn:Kuramoto}) numerically up to time $T=500$ with a time step $dt=0.1$. Then we compute the average number of oscillators in the largest cluster with frequency difference less than $d$, i.e, $\Psi_i \le d$, over the 5 simulations. The histogram graphs of the amount of oscillators corresponding to $\gamma=5$ and $\gamma=25$ are drawn separately in Figure \ref{fig:hist1}, where the $x$-axis is the frequency difference $\Psi_i$ and the $y$-axis is the average number of oscillators satisfying $\Psi_i \in (x-\frac{d}{2}, x+\frac{d}{2})$. The graphs show, as we expected, the size of the largest cluster of phase-entrained oscillators is larger for $\gamma=25$ than which of $\gamma=5$.

\begin{figure}[H]
    \begin{minipage}[b]{0.5\textwidth}
       \centering 
       \includegraphics[width=2.4in, height=1.8in]{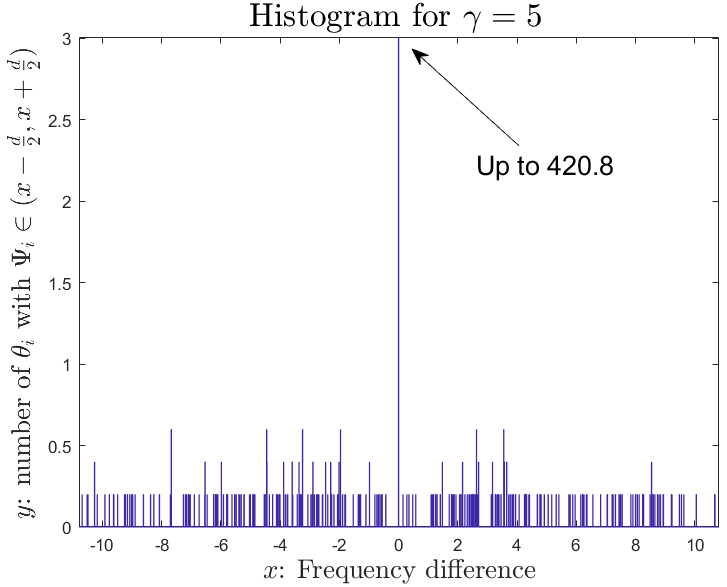}
    \end{minipage}%
    \begin{minipage}[b]{0.48\textwidth}
        \centering 
        \includegraphics[width=2.4in, height=1.8in]{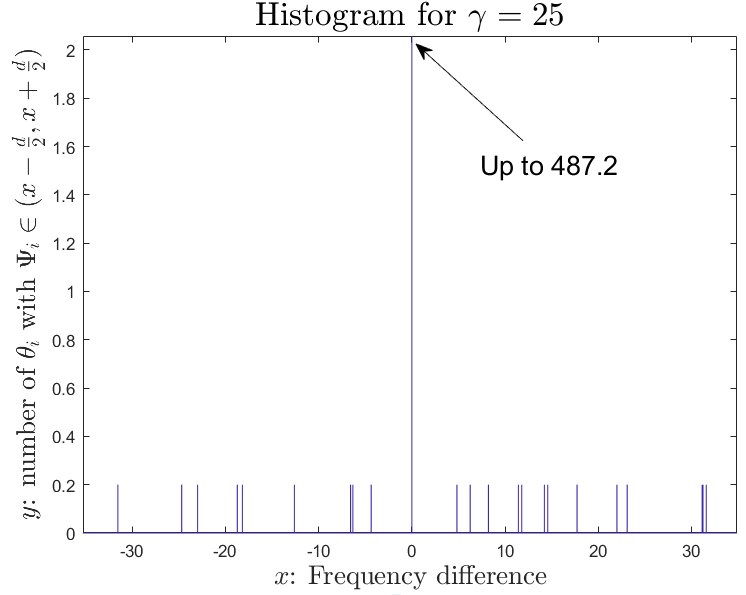}
    \end{minipage}
    \caption{Histogram graphs of average number of entrained oscillators: left: $\gamma=5$; right: $\gamma=25$}
    \label{fig:hist1}
\end{figure}

In the next experiment we define three percentages  $P_{\text{numeric}}$, $P_{\text{lower}}$ and $P_{\text{upper}}$ and make a careful comparison among them as a function of the coupling strength $\gamma$.
Firstly let $P_{\text{numeric}}$ denote the average percentage of oscillators in the largest phase-entrained cluster over the 5 simulations. For instance, the realizations in the right graph in Figure \ref{fig:hist1} give $P_{\text{numeric}} = \frac{487.2}{500} \approx 97\%$ when $\gamma=25$.
Secondly, we note the well-known $\ell_1-\ell_\infty$ estimate: namely that if we have
\begin{align}
|\omega_i - \omega_j| > 2\gamma \ge \frac{2\gamma}{N}|\sum_{j=1}^{N} \sin(\theta_j-\theta_i)|,
\end{align}
then the $ith$ oscillator and $jth$ oscillator will never synchronize. Thus, by the law of large number, the percentage of oscillators that lock together must be (with high probability) less than $\int_{-\gamma}^{\gamma} f(x;1,0) dx + o(1)$. We let $P_{\text{upper}}$ denote this percentage, i.e., $P_{\text{upper}} = \int_{-\gamma}^{\gamma} f(x;1,0) dx$. Finally, according to Theorem \ref{thm:Main2}, we know that as $\gamma > \gamma^* = 21.4950$,  there are at least $n = (1-\rho_{\min})\times N$ oscillators locking together, where $\rho_{\min}$ is defined as the $\rho$-coordinate of the first intersection point of $g_{\infty}$ and $\gamma\tilde{h}$. Let $P_{\text{lower}}$ denote the percentage of oscillators in the largest phase-entrained cluster derived from this theorem, i.e., $P_{\text{lower}} = 1 -\rho_{\text{min}}$.

Obviously, we have the following inequality
\begin{equation}
    P_{\text{lower}} \le P_{\text{numeric}} \le P_{\text{upper}}.%
    \label{ineqn:3Ps}
\end{equation}
As a numerical  check of inequality (\ref{ineqn:3Ps}), we consider a sequence of values of the coupling strength $\gamma$. For each value of $\gamma$ we plot $P_{\text{numeric}}$, the percentage of oscillators in the largest entrained cluster, as well as $P_{\text{upper}}$ and $P_{\text{lower}}$. Note that when $\gamma<\gamma^* = 21.4950$, functions $g_{\infty}$ and 
$\gamma\tilde{h}$ have no intersections, so our theorem cannot guarantee any cluster of phase-entrained oscillators. Therefore, $P_{\text{lower}}=0$ when $\gamma<\gamma^* = 21.4950$, as seen in Figure \ref{fig:3P_cauchy}. It is clear that, at least for the range of $\gamma$ considered the 
upper bound from the $\ell_1-\ell_\infty$ estimate and the law of large numbers is actually a very good approximation to the observed number of entrained oscillators. 

\begin{figure}[H]
    \centering 
    \includegraphics[width=.65\linewidth]{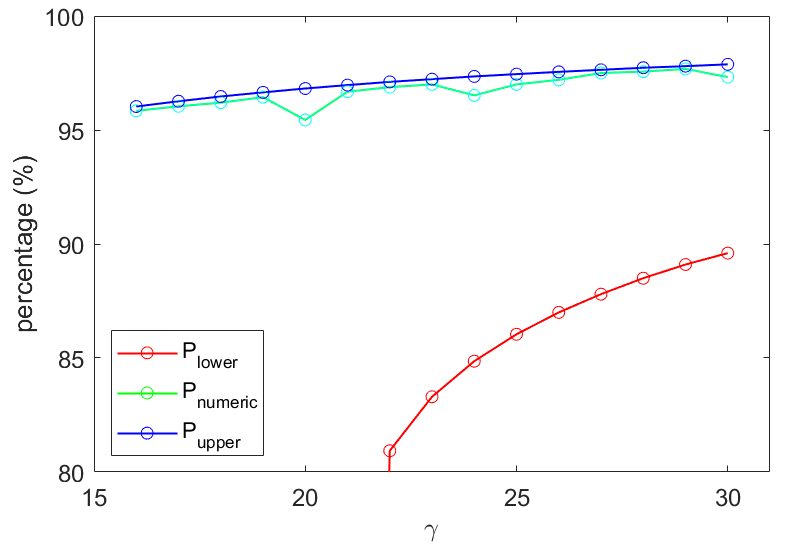}
    \caption{Largest cluster of partial synchronization for Cauchy distribution}
    \label{fig:3P_cauchy}
\end{figure}

It is interesting to consider the asymptotic  percentage of entrained oscillators for large coupling strength $\gamma$. Note that as $\gamma$ grows large, $\rho_{\text{min}}$ tends to approach zero, and thus, $ \tilde{h}(\rho)$ approaches $(\frac{2}{3})^{\frac{3}{2}}$. From Equation (\ref{eqn:gh}), we have $$g_{\infty}(\rho)  \to (\frac{2}{3})^{\frac{3}{2}}\gamma\approx 0.544\gamma \ \text{as} \ \rho \to 0.$$
Using the definition of $g_{\infty}$ as given by (\ref{eqn:ginfty}), it is easy to compute that $\rho_{\text{min}} = 2\int_{(\frac{2}{3})^{\frac{3}{2}}\frac{\gamma}{2}}^{\infty} f(x) dx \approx 2\int_{0.272\gamma}^{\infty} f(x) dx$. Thus, when $\gamma$ is large,
\begin{equation}
    P_{\text{lower}} \sim 1 - 2\int_{0.272\gamma}^{\infty} f(x) dx .
    \label{Plower}
\end{equation}
Denote the right-hand side as $P_{\text{lower asym}}$, i.e., $P_{\text{lower asym}} = 1 - 2\int_{0.272\gamma}^{\infty} f(x) dx$. Then for the Cauchy distribution, $(1-P_{\text{lower asym}})\sim \frac{1}{\gamma}$, i.e., the percentage of unlocked oscillators is inversely proportional to the coupling strength when the strength is large. On the other hand, for $P_{\text{upper}}$, by its definition, we have for any $\gamma>0$,
\begin{equation}
    P_{\text{upper}} = 1 - 2\int_{\gamma}^{\infty} f(x) dx .
    \label{eqn:Pupper}
\end{equation}

The order parameter $r$, defined by
\begin{equation}
    r(t) = \mid \frac{1}{N}\sum\limits_{j=1}^N e^{i\theta_j(t)} \mid,
\end{equation}
is a widely used proxy for synchronization. It is worthwhile to plot the evolution of the order parameter as a function of time for some different values of the coupling strength $\gamma$. Specifically we choose 
$\gamma = \gamma^*/2, \gamma^*, 2\gamma^*$, where $\gamma^*$ is the minimum coupling strength required by the theorem in order to guarantee the existence of a partially entrained state. As one can see from Figure \ref{fig:order_param}
we see the prder parameter $r(t)$ oscillate around a non-zero mean for values of $\gamma$ substantially below the $\gamma^*$ required by the theorem. 

\begin{figure}[H]
    \centering 
    \includegraphics[width=.65\linewidth]{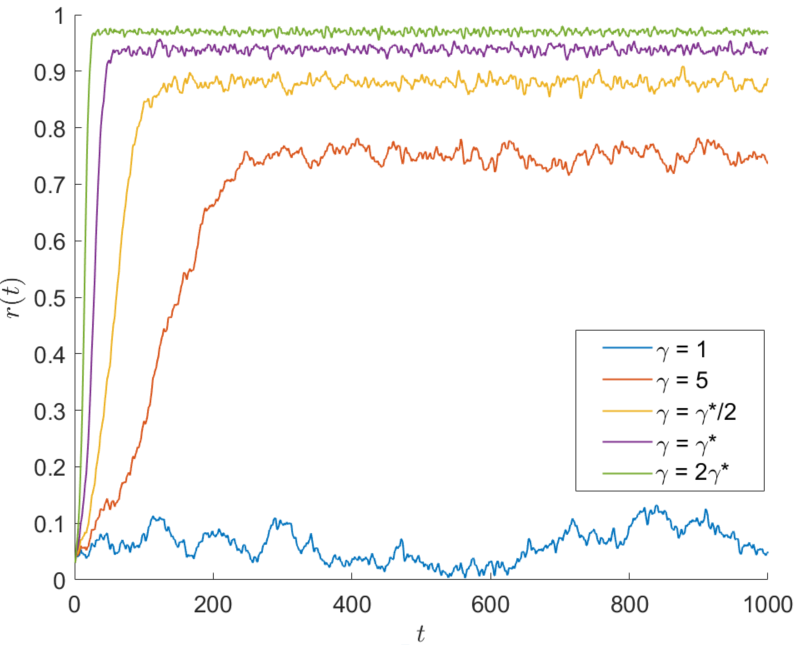}
    \caption{Order parameter $r(t)$ for different coupling strength}
    \label{fig:order_param}
\end{figure}

\end{example}

\section{Conclusions}

In this paper, we derived an explicit analytical expression for a sufficient condition on the coupling strength $\gamma$ to achieve partial phase-locking (entrainment) in the classical finite-N Kuramoto 
model (\ref{eqn:Kuramoto}) for any arbitrary monomodal distribution of the natural frequencies.  We also derived explicit upper and lower bounds on the percentage of entrained 
oscillators, again as a function of the coupling strength. This result can be veiwed as an extension of the result of  F.D\"orfler and F.Bullo \cite{Drfler2011} on full 
phase-locking to the case of partial phase locking. The requirement that the distribution of frequencies be monomodal is interesting, in that other authors have identified a change in the nature of 
the bifurcation when one moves from mono-modal distributions to bimodal or trimodal. In the work of Acebr\'on, Perales and Spigler\cite{APS}, for instance, the authors identify a change in the nature of the bifurcation, 
from subcritical to supercritical, as one moves from monomodal to multimodal distributions. 

While the scaling of the result is optimal -- it holds in the usual mean field scaling whereas, for instance, full phase-locking requires a slightly stronger coupling than the mean field coupling -- the constants are almost certainly 
not optimal and could likely be improved. It is interesting, in fact, that the numerical experiments suggest that the size of the largest entrained cluster is well-predicted by the upper bound given by the law of large numbers. It would be interesting 
to see if one could derive a lower bound that is closer to the current upper bound.

\newpage
\bibliographystyle{plain}
\bibliography{main}

\newpage
\begin{appendices}

\section{Proof of Proposition \ref{prop:ball}}

\begin{proof}
Suppose $J$ is the Jacobian matrix of (\ref{eqn:Kuramoto}) at $\theta^*$, $\lambda_1, \lambda_2,..., \lambda_N$ are $N$ eigenvalues of $J$ and $v_1, v_2,...,v_N$ are the corresponding eigenvectors. Since $\theta^*$ is a stable fixed point, by definition, $\lambda_N \le \lambda_{N-1} \le ... \le \lambda_2 < \lambda_1 = 0$. And clearly, $v_1 = \hat{\textbf{1}} = (1,1,...,1)$. Let $V=Ker(J)= \text{span} \{\hat{\textbf{1}}\}$ and $W=\text{span} \{v_2, v_3,...,v_N\}$, then $V\oplus W = \mathbb{R}^N$.

Now, consider any steady solution of Equation (\ref{eqn:pert_Kura}) that is close to $\theta^*$, i.e., consider $\theta = \theta^* + \tilde{\theta}$ where $||\tilde{\theta}||$ is small. Then we have

\begin{align}
\dot{\theta_i} = \dot{\tilde{\theta_i}} 
&= \omega_i + \frac{\gamma}{N}\sum_{j} \sin(\theta_j^* - \theta_i^* + (\tilde{\theta_j} - \tilde{\theta_i})) + \epsilon f_i(\theta,t) \\
&= \omega_i + \frac{\gamma}{N}(\sum_{j} \sin(\theta_j^* - \theta_i^*) + \sum_{j} \cos(\theta_j^* - \theta_i^*)(\tilde{\theta_j} - \tilde{\theta_i}) - \sum_{j} \sin(\xi_{i,j})\frac{(\tilde{\theta_j} - \tilde{\theta_i})^2}{2}) \\
& \quad + \epsilon f_i(\theta,t) \qquad \text{where} \ \xi_{ij} \ \text{is between} \ (\theta_j^* - \theta_i^*) \ \text{and} \ (\theta_j - \theta_i) \hspace{30mm}\\
&= \frac{\gamma}{N} \sum_{j} \cos(\theta_j^* - \theta_i^*)(\tilde{\theta_j} - \tilde{\theta_i}) - \frac{\gamma}{N}\sum_{j} \sin(\xi_{i,j})\frac{(\tilde{\theta_j} - \tilde{\theta_i})^2}{2} + \epsilon f_i(\theta,t) \\
&= (J\tilde{\theta})_i - \frac{\gamma}{N}\sum_{j} \sin(\xi_{i,j})\frac{(\tilde{\theta_j} - \tilde{\theta_i})^2}{2} + \epsilon f_i(\theta,t), 
\end{align}
where $(J\tilde{\theta})_i$ refers to the $i$th row of the matrix $J\tilde{\theta}$.\\

By our definition of semi-norm (\ref{def:semi_norm}), $||\theta||^2 = ||\theta||^2_\Omega =\frac{1}{N} \sum\limits_{1 \le i \le j \le N} (\theta_i - \theta_j)^2 = \frac{1}{N} \theta^T M \theta$, where $M=$
\begin{equation*} 
\left(                 
  \begin{array} {cccccc}  
    N-1 & -1 & -1 & ... & -1\\ 
    -1 & N-1 & -1 & ... & -1\\ 
    \quad & \quad & ......  \\
    -1 & -1 & -1 & ... & N-1\\
  \end{array}
\right).                 
\end{equation*}

Notice that $M$ has an eigenvalue 0 with multiplicity 1 and an eigenvalue $N$ with multiplicity $N-1$, so $M$ is positive semi-definite. By computing the derivative of this semi-norm for $\tilde{\theta} \in W$, we have
  
\begin{align*} 
\frac{d}{dt}\Vert\tilde\theta\Vert^2 &= \frac{1}{N}\frac{d}{dt}
                                     \tilde{\theta}^TM \tilde{\theta}
                                     = \frac{2}{N} \tilde \theta^T M\dot\tilde\theta  \\
& \le \frac{2}{N} \tilde\theta^T M J\tilde\theta +\frac{\gamma}{N^2}\sum_{i,j}| \sin(\xi_{i,j})[(\tilde\theta_i - \tilde\theta_j)^2 \sum_{k}\tilde\theta_i - \tilde\theta_k)]| + \frac{2}{N}\epsilon \tilde\theta^T M f  \\
& \le \frac{2}{N}\tilde\theta^T M J\tilde\theta + \frac{\gamma}{N^2}\sum_{i,j}[(\tilde\theta_i - \tilde\theta_j)^2 \sum_{k}|\tilde\theta_i - \tilde\theta_k|] + \frac{2\epsilon}{N}||\tilde{\theta}^TM||\cdot||f|| \\
& \le \lambda_2 \frac{2}{N}\tilde\theta^T M \tilde\theta + \frac{\gamma}{N^2}\sum_{i,j}(\tilde\theta_i - \tilde\theta_j)^2 (\sum_{k}(\tilde\theta_i - \tilde\theta_k)^2)^{1/2} N^{1/2} + \frac{2\epsilon }{N}  (N\tilde{\theta}^TM\tilde{\theta})^{1/2}\cdot(N^{1/2}C)\\
& \le 2\lambda_2 \Vert\tilde\theta\Vert^2 + \gamma \Vert\tilde\theta\Vert^3 + 2 \epsilon C N^{1/2} \Vert\tilde\theta\Vert .  
\end{align*}

For $\tilde\theta \in V$, $\Vert\tilde\theta\Vert = 0$. In this case, since $M\cdot J$ is negative semi-definite, we still have above inequality. Thus for any small $\tilde{\theta} \in \Re^N$, we have
\begin{align*}
\frac{d}{dt}{||\tilde{\theta}||^2} \le 2\lambda_2 ||\tilde{\theta}||^2 + \gamma ||\tilde{\theta}||^3 + 2\epsilon C N^{1/2} ||\tilde{\theta}||.
\end{align*}

To find the basin of attraction, it suffices to find the domain of $||\tilde{\theta}||$ such that
\begin{equation} 
2\lambda_2 ||\tilde{\theta}||^2 + \gamma ||\tilde{\theta}||^3 + 2 \epsilon C N^{1/2} ||\tilde{\theta}|| < 0,
\end{equation}
which will be satisfied if 
\begin{equation}
 \left\{
   \begin{array}{c}
   2\epsilon C N^{1/2} ||\tilde{\theta}|| < c_1|\lambda_2| ||\tilde{\theta}||^2  \\
  \gamma||\tilde{\theta}||^3 < c_2|\lambda_2| ||\tilde{\theta}||^2, \\
   \end{array}
 \right.
\end{equation}
where $c_1>0, c_2>0$ and $c_1+c_2 \le 2$. So we need
\begin{equation}
\frac{2\epsilon CN^{1/2}}{c_1|\lambda_2|} < ||\tilde{\theta}|| < \frac{c_2|\lambda_2|}{\gamma}.
\label{ineqn:basin_attract}
\end{equation}

It's clear to see (\ref{ineqn:basin_attract}) makes sense only when  $\epsilon < \frac{c_1c_2|\lambda_2|^2}{2CN^{1/2}\gamma}$. Since $c_1c_1 \le (\frac{c_1+c_2}{2})^2 \le 1$, the loosest bound on $\epsilon$ is $\frac{|\lambda_2|^2}{2CN^{1/2}\gamma}$, when $c_1=c_2=1$. 

Let $r(\epsilon) = 2\epsilon CN^{1/2}/|\lambda_2|$ and $R =
|\lambda_2|/\gamma$, then by Gronwall's inequality
\cite{Khalil1993}, the semi-norm of $\tilde{\theta}$ is
exponentially decreasing when $\tilde{\theta}$ is in the annulus of
radii $r(\epsilon)$ and $R$, and then stays in the ball of radius
$r(\epsilon)$ forever. So statements (1) and (2) in Proposition 3.2 were proved.
\end{proof}

\section{Proof of Proposition \ref{prop:glimit}}
\begin{proof}
The goal is to prove Equation (\ref{eqn:glimit}):
$$\lim_{N \rightarrow \infty}{\mathbb P} (|g(K,N) - g_{\infty}(\frac{K}{N})| \leq N^{-\frac12+\epsilon}) = 1,$$
in other words, we need
\begin{align}
    & \lim_{N \rightarrow \infty}{\mathbb P} (g \leq  g_{\infty} + N^{-\frac12+\epsilon}) = 1,
    \label{eqn:g_upper_bound}\\
    & \lim_{N \rightarrow \infty}{\mathbb P} (g \geq g_{\infty} - N^{-\frac12+\epsilon}) = 1.
    \label{eqn:g_lower_bound}
\end{align}
For simplicity, define $a = F^{-1}(1-\frac{\rho}{2})$ so that we have $g_\infty = 2a$ and define $\delta=\frac{1}{2}N^{-\frac{1}{2}+\epsilon}$, then Equations (\ref{eqn:g_upper_bound}) and (\ref{eqn:g_lower_bound}) can be rewritten as 
\begin{align}
    & \lim_{N \rightarrow \infty}{\mathbb P} (g \leq  2(a+\delta)) = 1,
    \label{eqn2:g_upper_bound}\\
    & \lim_{N \rightarrow \infty}{\mathbb P} (g \geq 2(a-\delta)) = 1.
    \label{eqn2:g_lower_bound}
\end{align}
Let's prove Equation (\ref{eqn2:g_upper_bound}) first. In fact, we will show $\mathbb{P}(g\leq 2a)$ tends to one as $N\to\infty$. Define $X_i$ =
\begin{equation}
\begin{cases}
1, & \text{if} \ \omega_i \in [-a,a] \\
0, & \text{if} \ \omega i \notin [-a,a].\\
\end{cases}
\end{equation}
Then $X_i's$ are i.i.d random variables since $\omega_i's$ are i.i.d random variables. Let $X=X_1+X_2+...+X_N$, then $X$ represents the number of $\omega_i$ such that $\omega_i \in [-a,a]$. By  strong law of large number theorem, $\frac{X}{N}$ converges to ${\mathbb E}(X_i)$ almost surely, i.e., ${\mathbb P}(\underset{N \to \infty} {\lim } \frac{X}{N} = \int_{-a}^{a} f(x) dx) = 1$. Notice that $\int_{-a}^{a} f(x) dx =  1 - \rho$, so we have ${\mathbb P}(\underset{N \to \infty} {\lim } \frac{X}{N} = 1- \rho) = 1$. Moreover, we know $g(\rho)\le 2a$ if $X=(1-\rho)N$ by the definition of the function $g$. Therefore, ${\mathbb P}(g(\rho) \le 2a) = 1$ as $N \rightarrow \infty$. Equation (\ref{eqn2:g_upper_bound}) has been proved.

The other direction Equation (\ref{eqn2:g_lower_bound}) is less trivial to prove. Intuitively, we want to show that with high probability no intervals with length $g_\infty-2\delta$ contain more than $(1-\rho)N$ points. To show this, we need to firstly make two important observations. First, notice that if no intervals of Length $L$ with $\omega_k$ at an endpoint contain more than $m$ points then no any other interval does. So we can only focus on $N$ intervals $\{[\omega_i, \omega_i+L]: i=1,2,...,N\}$. Second, the interval centered at zero maximizes the probability that a point lies in the interval, i.e., $I=[-L/2, L/2]$ gives the largest $\mathbb{P}(x\in I)$ among all intervals of length $L$. The proof follows from the fact that for $\mu=\int_a^{a+L}f(x)dx$, its derivative $\frac{d\mu}{da}=f(a+L)-f(a)$ is zero when $a=-\frac{L}{2}$. As a result, the probability that the interval of length $L$ with $\omega_k$ at an endpoint contains more than $m$ points is less than the probability that $[-L/2, L/2]$ contains more than $m$ points. Now, fix $L=2(a-\delta)$ where $\delta$ is defined at the beginning of the proof. Define $A_k$ as the event that interval $[\omega_k, \omega_k+L]$ containing more than $(1-\rho)N$ points, $A$ as the event that there exists an interval with length $L$ containing more than $(1-\rho)N$ points, and $B$ as the event that $[-L/2,L/2]$ contains more than $(1-\rho)N$ points. Clearly, our goal is to prove
\begin{equation}
    \mathbb{P}(A) \to 0\ \text{as}\ N\to\infty.
\end{equation}

Due to the above two observations and the union bound, we have
\begin{equation}
   \mathbb{P}(A) = \mathbb{P}(\cup_{k=1}^N A_k) \leq \sum_{k=1}^N \mathbb{P}(A_k) \le N\cdot\mathbb{P}(B).
\end{equation}
Note that
\begin{equation}
    \mathbb{P}(B) =  N\sum\limits_{M=\ceil{(1-\rho)N}}^N \binom{N}{M} p^M(1-p)^{N-M},
    \label{eqn:prob_B}
\end{equation}
where $1-\rho = \int_{-a}^a f(x)dx$ and $p=\int_{-L/2}^{L/2} f(x)dx = \int_{-a+\delta}^{a-\delta} f(x)dx$. we denote the right-hand side of Equation (\ref{eqn:prob_B}) as $R(\rho, p, N)$, then it is sufficient to show 
\begin{equation}
    R(\rho, p, N) \to 0 \ \text{as}\ N\to\infty.
\end{equation}
Define $\tau_{N,M}:=\binom{N}{M}p^M(1-p)^{N-M}=\frac{N!}{M!(N-M)!}p^M(1-p)^{N-M}$. Then $$\log(\tau_{N,M}) = \log(N!)-\log(M!)-\log(N-M)!+M\log(p)+(N-M)\log(1-p).$$
For large $N$, using Stirling's approximation: $\log(N!) \approx N\log(N)-N+\frac{1}{2}\log(2\pi N)$, we have
\begin{align*}
    \log(\tau_{N,M}) &\approx N\log(N) - N + \frac{1}{2}\log(2\pi N) - M\log(M) + M - \frac{1}{2}\log(2\pi M) \\
     &\quad -(N-M)\log(N-M) + (N-M) - \frac{1}{2}\log(2\pi (N-M))\\
     &\quad + M\log(p) - (N-M)\log(1-p)\\
    &= N\log(N) - M\log(N) - M\log(\frac{M}{N}) - (N-M)\log(N)\\
    &\quad - (N-M)\log(\frac{N-M}{N}) + M\log(p) - (N-M)\log(1-p)\\
    &\quad + \frac{1}{2}\log\left(\frac{N}{2\pi M(N-M)}\right)\\
    &= N\left(-\frac{M}{N}\log(\frac{M}{N}) - (1-\frac{M}{N})\log(1-\frac{M}{N}) + \frac{M}{N}\log(p) + (1-\frac{M}{N})\log(1-p)\right)\\
    &\quad + \frac{1}{2}\log\left(\frac{N}{2\pi M(N-M)}\right)\\
    & = N\left(-x\log(x) - (1-x)\log(1-x) + x\log(p) + (1-x)\log(1-p)\right)\\
    &\quad + \frac{1}{2}\log\left(\frac{N}{2\pi x(1-x)}\right) - \log(N) \quad \text{by setting}\ x=\frac{M}{N}.
\end{align*}
Let 
\begin{equation}
    \phi(x) := -x\log(x) - (1-x)\log(1-x) + x\log(p) + (1-x)\log(1-p),
\end{equation}
then 
\begin{equation}
    \phi'(x) = \log(\frac{p}{1-p}) - \log(\frac{x}{1-x})\ \text{ and} \ \phi''(x) = \frac{-1}{x(1-x)}<0.
\end{equation}
So $\phi$ reaches the largest when $x=p$. And thus, when $N$ is large, the maximum of $\tau_{N,M}$ occurs when $x=\frac{M}{N}=p$. In the neighborhood of the maximum: $x= p + y$, $\phi(x)\approx \frac{-1}{2p(1-p)}y^2$. So we have
\begin{equation}
   \log(\tau_{N,M}) = \frac{1}{2}\log(\frac{1}{2\pi p(1-p)N}) - \frac{N}{2p(1-p)}y^2 + O(y), 
\end{equation}
and thus 
\begin{equation}
    \tau_{N,M} \approx \frac{1}{N} \sqrt{\frac{N}{2\pi p(1-p)}} e^{-\frac{Ny^2}{2p(1-p)}+ O(y)} .
\end{equation}
Recall that $1-\rho = \int_{-a}^a f(x)dx$ and $p=\int_{-a+\delta}^{a-\delta} f(x)dx$ where $\delta=N^{-\frac{1}{2} +\epsilon}$, then $(1-\rho)-p \sim N^{-\frac{1}{2} +\epsilon}$. So for $M\ge \ceil{(1-\rho)N}$, we have $\frac{M}{N}-p \gtrsim N^{-\frac{1}{2} +\epsilon}$, i.e.,  $y \gtrsim N^{-\frac{1}{2} +\epsilon}$. On the other hand, $y\le 1-p<1$. Thus
\begin{equation}
    \tau_{N,M} \lesssim \frac{1}{N}\sqrt{\frac{N}{2\pi p(1-p)}} e^{-\frac{N^{2\epsilon}}{2p(1-p)}+O(1)}.
\end{equation}
So we have
\begin{align*}
    R(\rho, p, N) &\le N \cdot N \cdot \left(\frac{1}{N}\sqrt{\frac{N}{2\pi p(1-p)}} e^{-\frac{N^{2\epsilon}}{2p(1-p)}+O(1)}\right)\\
    & = N\sqrt{\frac{N}{2\pi p(1-p)}} e^{-\frac{N^{2\epsilon}}{2p(1-p)}+O(1)},
\end{align*}
which implies $R(\rho, p, N) \to 0$ as $N\to\infty$ for any positive $\epsilon$. The proof of Equation (\ref{eqn2:g_lower_bound}) is now complete.

With the two equations (\ref{eqn2:g_upper_bound}) and (\ref{eqn2:g_lower_bound}), we proved Proposition (\ref{prop:glimit}).

\end{proof}

\end{appendices}

\end{document}